\newcommand{\EE}{\mathbb{E}}
\newcommand{\RR}{\mathbb{R}}
\def\dd{\mathrm{d}}
\newtheorem{theorem}{Theorem}
\newtheorem{corollary}{Corollary}
\newtheorem{lemma}{Lemma}
\title{Repeated  Bidding with Dynamic Value}
\author{Benjamin Heymann, Alexandre Gilotte, Rémi Chan-Renous}
\thanks{The authors are with Criteo AI Lab, Paris, France}
\begin{document}

\maketitle

\begin{abstract}

We consider a repeated auction where the buyer's utility
for an item depends on the time  that elapsed since his last purchase.
We present an algorithm to build the optimal bidding policy, and then, because optimal might be impractical, we discuss the cost for the buyer of limiting himself to shading policies. 

\end{abstract}

\section{Introduction}
\subsection{Repeated auctions with dynamic value: a planning problem}
 This article  analyses a \textit{repeated second price auctions} where the buyer's utility for the item depends on the time elapsed since his last purchase of a similar item:  an item value depends on the \emph{age} of the previous purchase. Throughout the paper, We use ad auctions as a motivating application.
The rise of  online marketplaces and digital advertising have fuelled the study of \textit{repeated auctions} along several research axis, in particular budget/ROI constraints, learning, and strategic interactions.
However, surprisingly, we enter into much less explored territory if the buyer's utility for an item \textit{depends on the previous auctions outcomes}, which is notwithstanding a reasonable belief for use cases such as digital marketing, where it can be beneficial to space the marketing interactions over time because of the user's \textit{display fatigue}.
Yet, finding an optimal bidding policy in this widespread setting has not been done yet,  even when the auction is second price. 
This article aims to fill the gap.
We display in Fig.~\ref{fig:value_dynamics_example} an example of  valuation dynamic: after an auction is won, the value of the next auctions drops and then increases back to its initial value as time goes by. 
At this point, it should be intuitive to the reader that bidding the value (the \textit{greedy} strategy) is suboptimal, even though the auction is second price. We illustrate this by showing in Fig.~\ref{fig:partial_gamma} the payoff of several linear scaling strategies. The greedy strategy corresponds to a scaling factor of 1 and is, in our context, \textbf{never} optimal.
 
This is in sharp contrast with the widespread belief that good enough feature engineering  does the job. This intuition  is flawed because it overlooks the underlying \textit{planning problem}. 
Indeed, in the minimal model we introduce,  the bidder has access to the exact immediate value of the item, but we still observe that bidding this value is suboptimal, even in the second price setting.
Last, we mention that in the edge case where the buyer is interested in  buying at most  one item, we get a setting close in spirit to~\cite{livanos2022prophet}, who introduce a  procurement version of the prophet problem, where the goal is to minimize the procurement cost.
\begin{figure}
  \begin{center}
    \includegraphics[width=0.48\textwidth]{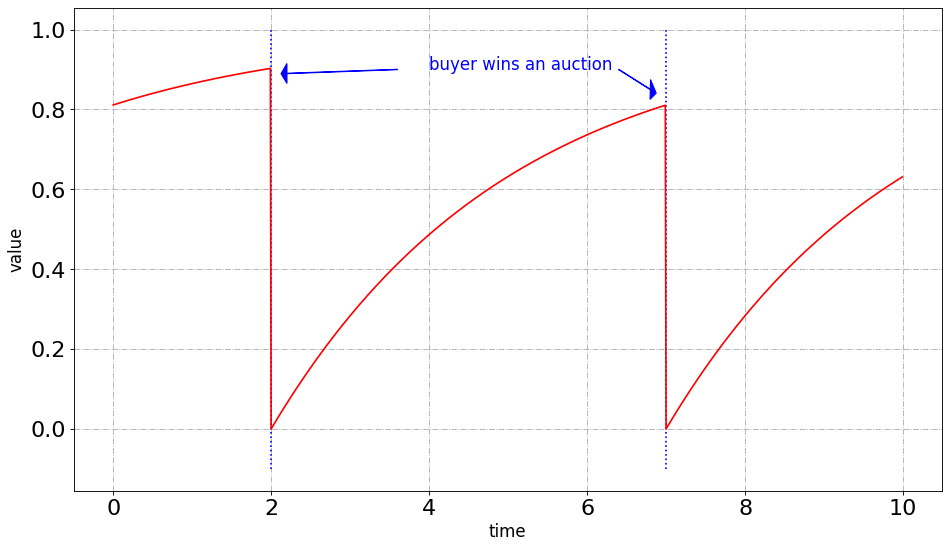}
  \end{center}
  \caption{Example of  dynamical value. When the bidder wins an auction, his value drops brutally, and then progressively goes back to a maximum value. An optimal bidding strategy should account for this phenomenon.}
           \label{fig:value_dynamics_example}
\end{figure}
\subsection{Display advertising auctions}
We proceed with some contextual elements on our  motivating use case,  display advertising auctions.
Digital advertising allows the monetizing of publisher content  programmatically.
When a user reaches one of the publisher's pages, the right to show a display banner to the user is sold. 
The mechanism that elicits the deal is often an auction that takes place in real time, hence the name: Real-Time-Bidding (\textbf{RTB}).
Real-Time-Bidding resents several practical challenges that have motivated a growing body of scientific work at the frontier of mathematics, economics, and computer sciences~\cite{choi2020online}.
The buyer typically derives a valuation formula for the display opportunity of the shape~\cite{bompaire2021causal,balseiro_ec_2017,choi2020online} $ \textit{ Value\_Per\_Event} \times  \Pr(E|X)$,
where $\Pr(E|X)$ is the probability that an event will take place in the future (such as click, sales\ldots it depends) knowing what is available in the user's context (here denoted by $X$) and \textit{Value\_Per\_Event} is a  multiplier that is independent of $X$. 
 This multiplier often integrates business-related constraints such as budget~\cite{balseiro_ec_2017} or ROI or cost per action.  Such a valuation then becomes the input of a bidding module, tasked to find a bid that maximizes the immediate payoff (\textit{i.e.} expected value minus cost). For a second-price auction, this bidding module simply returns the estimated value. For non-incentive compatible auctions such as first-price auctions, methods to estimate the competition have been proposed~\cite{Zhou_2021}. 
  
\paragraph{User fatigue and frequency capping}
One specificity of display advertising is that bidders typically receive a sequence of auctions for opportunities to display an ad to the same user. Those auctions arise as the user browses the web and opens new pages. 
However, it is usually not seen as desirable to display too many similar ads to the same user in a short period of time, and a very common practice consists in defining a \textit{frequency capping }\cite{choi2020online}, which is a maximum number of ads which can be displayed to the same user for the same campaign on a predefined period of time (such as ``maximum 10 displays per day per user"). The bidder would thus stop bidding on a user when the maximum number of ads is reached, until the next period.
 The main rational for applying such frequency capping is essentially that ads displayed to the same user usually have diminishing return. Those return could even become negative, when too many similar ads in a short period of time could result in a degraded user experience. Several studies also point  to the usefulness of spreading the ads in time \cite{sahni2015effect, braun2013online}.
\paragraph{ Fatigue as a feature of the predictors }
To deal with this decreasing return of the ads,  another frequent method consists in  directly modelling how the \emph{past} displayed ads affect the value of the current auctioned ad opportunity. This can be done, for example by measuring the \textit{fatigue} as the number of ads displayed to the user in a past period of time, and using this fatigue feature as a predictor of the $\Pr(E|X)$ model. Several works point to a similar idea, such 
 as \cite{agarwal2009spatio,ma2016user,aharon2019soft}.
We would like to note here that while these features improve the immediate value estimation, they do not account for the impact of an ad on the future opportunities, that is, they still miss the planning problem.
\paragraph{ Impact of winning one auction on future opportunities }
Indeed, one logical consequence of the diminishing return of ads is that winning an auction \emph{now} impacts the value of the \emph{next} opportunities for the same user.
But while it is possible that the $\Pr(E|X)$ term of the valuation recognize the fatigue effect, there is a hole in the literature on how  the bidder component should address the problem. The problem is recognized in~\cite{bompaire2021causal} but no solution is proposed. 
 In a related vein, \cite{diemert2017} propose a heuristic factor to decrease the bids after a click, but the method does not fully address the problem.
In this paper, we propose to study how the bidder should optimally bid in a simplified model inspired by this setting.
We note that the frequency capping could be interpreted as setting the value of the items to 0 during some time after the capping is reached. We  simplify this by assuming that the value of the auctioned item depends solely on the time elapsed since the last won auction.
We also note that today most RTB auctions are first-price. We decided  to study the case of second price auction instead; both for simplicity, and to emphasize that even in this simpler second price setting the problem is non-trivial.
\subsection{Repeated auctions}
 
Several streams of literature related to repeated auctions have flourished around the online marketing use case. Without aiming at being complete, we mention some of them in this section and discuss how the \textit{repetition} of the auction structure the problems under scrutiny.
First, the existence of a budget or an ROI constraint over a sequence of auctions couples all the individual bidding problems together. 
In the literature, such constraints are typically analyzed   either by using a constraint relaxation or by solving an associated knapsack problem. 
The typical solution is then  a scaling  of the value with a well-chosen factor. See for instance 
\cite{balseiro_ec_2017,heymann2019cost,conitzer_or_22,gao2022bidding,castiglioni2022a}.
In this stream of research, this is mostly the coupling of the bidding problems that is under scrutiny. In this line of research, \textit{repetition} is mostly about coupled optimization problems.
Second, the level of bid can be seen as a decision that is repeated over time. 
The outcome (value and payment) depends on this decision. 
This setting can be studied with the bandit framework. See for instance
\cite{weed2016online,achddou2021efficient}.
In this stream of research, this is mostly the explore-exploit tradeoff of the learning bidder that is under scrutiny, and \textit{repetition} is mostly about learning.
Third, there is a line of research that puts under scrutiny the interactions between the seller and the multiple buyers. In this steam of work,  \textit{repetition} is mostly about playing a repeated game
~\cite{amin2014repeated,Drutsa17,golrezaei2019dynamic,Robust_Repeated_First_Price_Auctions,kanoria2020dynamic,nedelec2022learning,deng2022posted}.
\subsection{Contribution}
The applied work~\cite{bompaire2021causal} recognizes the difficulty to handling the coupling between present and future bidding decisions.
We introduce a realistic minimal  model to study this coupling, and frame the design of a bidding policy as a continuous time optimal control problem over an infinite horizon. 
We present structuring properties of this control problem, such as the monotony of the bidding policy and a useful identity for the expected payoff. 
\begin{wrapfigure}{h}{0.4\textwidth}
  \begin{center}
    \includegraphics[width=0.48\textwidth]{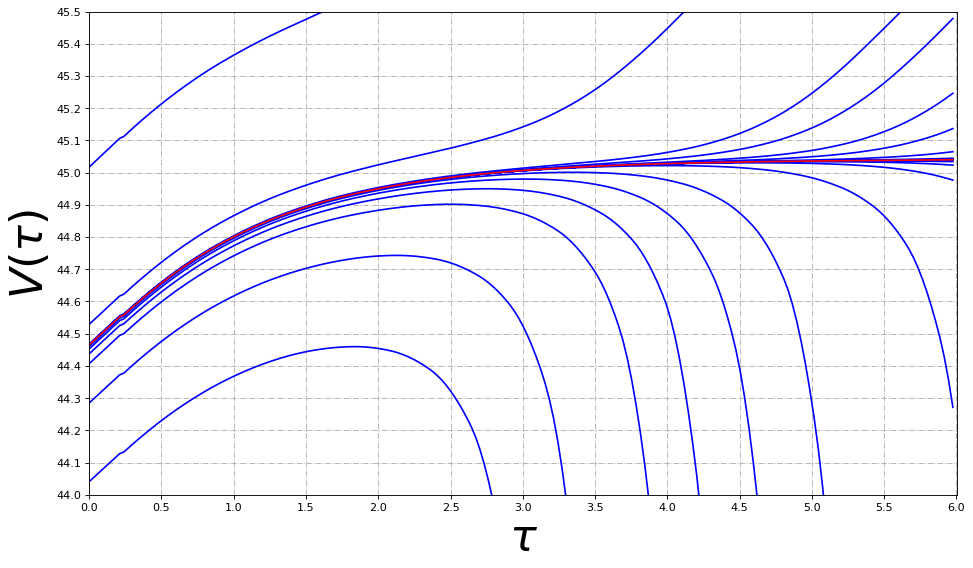}
  \end{center}
  \caption{An example of Algorithm~\ref{algo1} run. In red is the output of the algorithm, in blue the iterates. }
         \label{fig:algorithm}
\end{wrapfigure}
We then introduce an algorithm that iterates toward  the optimal policy. 
Interestingly, the proof relies on a dynamical system argument, more precisely, a refined version  of the Cauchy-Lipschitz  Theorem.
The algorithm allows us to compare through Monte-Carlo simulations the optimal bidding policy and the greedy policy, which corresponds to the economic cost of not accounting for future opportunities in the bidding policy design.
We then take a look at shading policies because they constitute a natural class of simple approximations to look at. 
While we show that shading policies cannot be solutions to the optimal control problem, we also observe that in the numerical experiments, they  perform very well. 
We also derive a closed form expression of the buyer payoff in a specific case. 
In what follows, we  use the masculine pronoun \textit{he} to refer to the buyer, and the feminine pronoun \textit{she} for the seller.
Proofs are postponed to the supplementary material.
\section{Model and bidding problem} 
\label{sec:model-description}
This section introduces a simple environment where the value of an item depends on the time since the last won auction, that we  simply refer to as the system \textit{age} thereafter. 
The section ends with Lemma~\ref{lemma:ODE} that characterizes the value function with a  differential equation.
An auctioneer receives items that she needs to sell immediately to a set of  buyers, and we take the perspective of one of those bidders facing a stationary competition.
The items arrivals follow a Poisson process of intensity $\mu$.  We denote by $\tau$ the  \emph{age}, which is the time that has elapsed since the last won auction.
We assume that the value of an item for the bidder depends solely on the age $\tau$, and  note this value $k(\tau)$. We also suppose that $k$ is a non-decreasing and bounded function.
Fig.~\ref{fig:value_dynamics_example} shows an example of how the value of buying a new item may evolve over a  timeline.
We suppose the auctions to be second price, with an iid competition with a known CDF, $q$.
This competition may include a reserve price set by the seller. In particular, the story includes the posted price scenario.
Thus, $q(b)$ is  the probability that the buyer wins with a bid equal to  $b$. 
Finally, we assume that the sequence  ends at a random time $T_\gamma$, where $T_\gamma$ follows an exponential distribution of parameter $\gamma$, or equivalently we say that the buyer has a discount rate $\gamma$. 
The bidder's goal is to maximize, in expectation, the sum of the values of the auctions he wins, minus the cost he pays, before the end of the sequence. 
Here, we note that this environment may be viewed as a continuous Markov process, where the state is fully summarized by the age $\tau$.
We denote by 
 $p(b)$ the average payment of the user when bidding $b$. Because the auction is second price, we have the relation
$p(b)=q(b)b-\int_0^b q(t)\dd t$.
We denote by $U(v,b) \stackrel{def}{=} q(b)\cdot v - p(b)$  the expected utility when bidding $b$ and valuing $v$ for winning. 
 
Because the state depends solely on the age $\tau$, the bidder's policy can be fully described by a measurable function $ \tau\longrightarrow b(\tau) \in \mathbb{R}^+$ specifying the bid  when the age is $\tau$. We note $\mathcal{B}$ the set of these bidding functions.
Then, for a bidding function $b \in \mathcal{B}$, we  note  $V_{b,\gamma,\mu}(\tau)$ the expectation of the bidder's future revenue when the state is $\tau$, or $V_{b}(\tau)$ for simplicity,
when the context is clear.
Let $T_1 ,T_2 \ldots T_n\ldots $, the time of the next auctions,  $C_1 ... C_n ...$ the competition at these times, and $\tau$ the current age, then
$
V_{b}(\tau) \stackrel{def}{=}  \EE \sum_{i = 1}^{\infty} e^{-\gamma T_i}  \left( k(\tau(T_i)) - C_i\right) 
\textbf{1} \{ b( \tau(T_i) ) > C_i  \}$. 
We are ready to state Lemma~\ref{lemma:ODE}.
\vspace{5mm}
\begin{lemma}
\label{lemma:ODE}
For $b\in\mathcal{B}$,
   $V_b$  satisfies the  differential equation
    \begin{align*}
        V'_b( \tau)   &= - \mu \cdot U \big( k(\tau) + V_b(0) - V_b(\tau) ,  b(\tau) \big)   + \gamma V_b( \tau ).              
    \end{align*}
\end{lemma}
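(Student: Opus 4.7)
My plan is to derive the ODE by a standard Bellman-type decomposition, conditioning on the first auction arrival after time zero, then Taylor-expanding in the infinitesimal-time-step limit.

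First I would fix an age $\tau$ and a small $h>0$, and split the trajectory starting from state $\tau$ according to whether an auction occurs in $[0,h]$ or not. Since auctions arrive as a Poisson process of rate $\mu$ independent of the exponential termination time of rate $\gamma$, the probability that no auction occurs before time $h$ is $e^{-\mu h}$ and the combined survival-and-discount factor on that event is $e^{-(\mu+\gamma)h}$, with age updated to $\tau+h$ at time $h$. Conditional on the first auction occurring at time $s\in[0,h]$ (density $\mu e^{-\mu s}$), the auction happens in state $\tau+s$, so the expected immediate reward discounted back to time $0$ is $e^{-\gamma s}U(k(\tau+s),b(\tau+s))$, and the continuation value in time-$s$ currency is $q(b(\tau+s))V_b(0)+(1-q(b(\tau+s)))V_b(\tau+s)$, since a win resets the age to $0$ while a loss leaves it at $\tau+s$. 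Using the Markov property, assembling these pieces yields the integral identity
\begin{align*}
V_b(\tau) = e^{-(\mu+\gamma)h}V_b(\tau+h) + \int_0^h \mu\, e^{-(\mu+\gamma)s}\bigl[&U(k(\tau+s),b(\tau+s))\\ &+ q(b(\tau+s))V_b(0) + (1-q(b(\tau+s)))V_b(\tau+s)\bigr]\,ds.
\end{align*}

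Second, I would simplify the bracketed integrand. Expanding $U(v,b)=q(b)v-p(b)$ gives
\[
U(k(\tau+s),b(\tau+s)) + q\,V_b(0) + (1-q)V_b(\tau+s) = U\bigl(k(\tau+s)+V_b(0)-V_b(\tau+s),\,b(\tau+s)\bigr) + V_b(\tau+s),
\]
which is exactly the term appearing in the ODE (modulo the extra $V_b(\tau+s)$ that will combine with the $V_b(\tau+h)$ prefactor). Finally, I would divide both sides by $h$, use the first-order expansions $e^{-(\mu+\gamma)h}=1-(\mu+\gamma)h+o(h)$ and $V_b(\tau+h)=V_b(\tau)+hV_b'(\tau)+o(h)$, and evaluate the integral to leading order at $s=0$. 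The $V_b(\tau)$ term on the right cancels the leading $V_b(\tau+h)$, the two occurrences of $V_b$ telescope to leave $-\gamma V_b(\tau)+\mu V_b(\tau)-\mu V_b(\tau)$, and the remaining terms give $V_b'(\tau)=\gamma V_b(\tau)-\mu U(k(\tau)+V_b(0)-V_b(\tau),b(\tau))$, as claimed.

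The main subtlety I expect is justifying differentiability of $V_b$ and the right-derivative/left-derivative coincidence, since $b$ is only assumed measurable. The cleanest route is to first establish the above \emph{integral} identity rigorously from the Markov property (which requires only boundedness of $k$ and standard dominated convergence to control the tail of the Poisson series), then observe that the integral equation forces $V_b$ to be absolutely continuous in $\tau$, so the ODE holds almost everywhere, and pointwise at every continuity point of $\tau\mapsto U(k(\tau)+V_b(0)-V_b(\tau),b(\tau))$. The algebraic rewriting in the second step is the only piece where one has to be careful not to lose or double-count the $V_b(\tau+s)$ continuation term.
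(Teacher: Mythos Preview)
Your approach is essentially the paper's own: the same first-arrival decomposition, the same integral identity (your bracketed integrand is exactly the paper's auxiliary quantity $W_b(\tau)=V_b(\tau)+U(k(\tau)+V_b(0)-V_b(\tau),b(\tau))$), and the same division by the time step to extract the derivative.

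The one technical point worth flagging is the differentiability step. Your Taylor expansion $V_b(\tau+h)=V_b(\tau)+hV_b'(\tau)+o(h)$ presupposes what you want to prove. The paper avoids this circularity by rewriting the identity as a genuine difference quotient,
\[
\frac{V_b(\tau+\delta)-V_b(\tau)}{\delta}=-\frac{1}{\delta}\int_0^\delta e^{-(\gamma+\mu)t}\mu\,W_b(\tau+t)\,dt+\frac{1-e^{-(\gamma+\mu)\delta}}{\delta}V_b(\tau+\delta),
\]
noting the right-hand side is bounded (so $V_b$ is continuous), and then passing to the limit to obtain both differentiability and the formula in one stroke. Your fallback via absolute continuity and an a.e.\ ODE is a valid alternative---and arguably the more honest one when $b$ is merely measurable, since the paper's step ``hence $W_b$ is continuous'' tacitly needs regularity of $b$ through $q(b(\tau))$ and $p(b(\tau))$.
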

\noindent The proof of Lemma~\ref{lemma:ODE} relies on the dynamic programming principle applied to $V_b$.
\section{Optimal bid}
\label{section:optimal-bid}
This section contains most of the theoretical results. 
We define the optimal value function (or Bellman value), $V^\star(\tau)$, as the sup, on all possible bidding policies, of  $V_b(\tau)$, namely
$V^\star(\tau) \stackrel{def}{=} \sup_{b \in \mathcal{B}} V_b(\tau). $
It is practical to introduce the notation $\pi(v)$  for the expected profit of a bidder of value and bid $v$ in a one shot auction, namely $\pi(v)\stackrel{def}{=}U(v,v)=q(v)v  - p( v)$. This corresponds to the expected profit of the incentive compatible bid in the one shot auction. 
Lemma~\ref{lemma:b_star} provides a  dynamic programming characterization of the Bellman value, and provides a relation between the Bellman value and the bid.
The Bellman value is then proved to be the solution of a parametrized differential equation of unknown parameter (Lemma~\ref{lemma:cauchy}). Theorem~\ref{th:monotony_of_b} ensures that when $k$ is concave, the policy is monotone increasing. 
Theorem~\ref{th:monotony_of_b} can be thought of as a well posedness sufficient condition: primitives for which the optimal bid would not be monotone could arguably be considered pathological from an econometric standpoint.
For   Section~\ref{sub:algo}, that presents the algorithm for computing an optimal policy, we  suppose that $q$ is continuous, and thus $\pi$  is $C^1$. \footnote{Indeed, by definition, 
$
    \pi(v)=q(v)v  - p( v)=\int_0^v q(t)\dd t
$}
\subsection{Differential Equation}
From the proof of Lemma~\ref{lemma:ODE}, we get the following characterizations of $V^\star$ and the optimal bid $b^\star$
\vspace{5mm}
\begin{lemma}
\label{lemma:b_star}
We have the relation
$V_{t}^\star =\int_0^{+\infty}\mu e^{-(\mu+\gamma)t}
   \Big( \pi(k_t+V_0^\star-V_t^\star) +V_t^\star\Big)
    \dd t
$. Moreover, there exists an  optimal bid $b^\star$, and it satisfies the relation 
\begin{align*}
   b^\star(\tau) = \max \left(0;  k(\tau) + V^\star(0) - V^\star(\tau)  \right) \ . 
\end{align*}
\end{lemma}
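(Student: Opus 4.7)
The plan is to mimic the proof of Lemma~\ref{lemma:ODE} in integrated form and then exchange the supremum with the integral pointwise.

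\emph{Step 1 (integral DP for an arbitrary policy).} Starting from age $\tau$, let $S\sim\mathrm{Exp}(\mu)$ be the time until the next auction; the exponential horizon of rate $\gamma$ is absorbed as a discount factor $e^{-\gamma s}$. Conditioning on $S=s$ and on the win/lose outcome at age $\tau+s$ (the age resetting to $0$ on a win and remaining at $\tau+s$ on a loss), together with the strong Markov property, yields
\[
V_b(\tau)=\int_0^{+\infty}\!\mu e^{-(\mu+\gamma)s}\Bigl[U\bigl(k(\tau+s),b(\tau+s)\bigr)+q(b(\tau+s))V_b(0)+(1-q(b(\tau+s)))V_b(\tau+s)\Bigr]\dd s.
\]
Regrouping the win/lose continuation terms into the utility argument collapses the bracket to
\[
V_b(\tau)=\int_0^{+\infty}\!\mu e^{-(\mu+\gamma)s}\Bigl[U\bigl(k(\tau+s)+V_b(0)-V_b(\tau+s),\,b(\tau+s)\bigr)+V_b(\tau+s)\Bigr]\dd s.
\]

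\emph{Step 2 (pointwise maximization).} For each $s$ the integrand depends on the bidding function only through the scalar $b(\tau+s)$, so I would maximize pointwise. Writing $U(v,b)=\int_0^b(v-c)\dd q(c)$, the classical second-price identity gives $\sup_{b\ge 0}U(v,b)=\pi(v^+)$, attained by the truthful bid $b=v^+$. Applied to $v=k(\tau+s)+V^\star(0)-V^\star(\tau+s)$, this substitution delivers simultaneously the announced integral equation for $V^\star$ and the candidate formula $b^\star(\tau)=\max\bigl(0,\,k(\tau)+V^\star(0)-V^\star(\tau)\bigr)$.

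\emph{Step 3 (attainment --- the main obstacle).} The delicate point is that $b^\star$ is defined through $V^\star$, so the equality $V_{b^\star}=V^\star$ is a priori circular. The upper bound $V_b\le V^\star$ is immediate by plugging the pointwise inequality $U(v,b)\le \pi(v^+)$ into the identity of Step~1. For the matching lower bound, the key observation is that $b^\star$ is a measurable Markov policy depending only on the current age $\tau$ (with $V^\star(0)$ playing the role of a fixed parameter); hence Lemma~\ref{lemma:ODE} applied to $b^\star$ shows that $V_{b^\star}$ satisfies the same Cauchy problem as $V^\star$, and the uniqueness supplied by the upcoming Lemma~\ref{lemma:cauchy} then forces $V_{b^\star}=V^\star$, establishing existence of an optimal bid together with the stated formula.
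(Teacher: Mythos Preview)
Your Step~1 is fine and coincides with equations~\eqref{eq:w_b}--\eqref{eq:v_b} of the paper. The gaps are in Steps~2 and~3.

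\textbf{Step 2.} You write that ``for each $s$ the integrand depends on the bidding function only through the scalar $b(\tau+s)$''. That is false: in the identity you derived, the bracket also contains $V_b(0)$ and $V_b(\tau+s)$, which depend on the \emph{whole} policy $b$. Consequently, taking the supremum over $b\in\mathcal{B}$ does not decouple into a pointwise supremum over the scalar $b(\tau+s)$. What you are implicitly using when you replace $V_b$ by $V^\star$ inside the integral and then maximize in the scalar bid is the dynamic programming principle for $V^\star$ --- precisely the statement to be proved. The paper closes this gap not by ``exchanging sup and integral'' but by the standard Bellman-contraction route: defining $B(V)(\tau)=\max(0,k(\tau)+V(0)-V(\tau))$, observing that $V_{B(V_b)}\ge V_b$ (one-step improvement in a second-price auction), and using that $V\mapsto V_{B(V)}$ is a contraction so that value iteration converges to $V^\star$ with associated greedy policy $b^\star=B(V^\star)$.

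\textbf{Step 3.} Even granting the integral equation for $V^\star$, your attainment argument is circular. Lemma~\ref{lemma:ODE} applied to $b^\star$ yields
\[
V'_{b^\star}(\tau)=\gamma V_{b^\star}(\tau)-\mu\,U\!\bigl(k(\tau)+V_{b^\star}(0)-V_{b^\star}(\tau),\,b^\star(\tau)\bigr),
\]
where $b^\star(\tau)$ is built from $V^\star$. This ODE has $V_{b^\star}$ in the value slot and $V^\star$ in the bid slot; it is \emph{not} the system $\mathcal{F}_{y_0}$ for any $y_0$ unless you already know $V_{b^\star}=V^\star$. So the two functions do not ``satisfy the same Cauchy problem''. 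In addition, Lemma~\ref{lemma:cauchy} provides no uniqueness statement (uniqueness of the bounded solution only appears later via Corollary~\ref{corrolary:uniqueness_stable} / Lemma~\ref{lemma:algo}), and in the paper's logical order Lemma~\ref{lemma:cauchy} is a reformulation of the present lemma, so invoking it here would be circular anyway. The paper avoids this entirely: once the contraction argument shows that value iteration converges to $V^\star$ and that the limit policy is $b^\star$, $V_{b^\star}=V^\star$ follows directly, and plugging $b^\star$ into \eqref{eq:v_b} gives the stated integral identity.
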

\vspace{5mm}
This differential equation is complicated to solve directly because its initial value $V^\star_0$ also appears as a parameter of the dynamic. To clarify this point, we define
 $\Phi: \RR \times \RR_+ \times \RR_+\to \RR$ such that 
$
  \Phi(t, v, \lambda) =\gamma v -\mu\pi(k_t+\lambda-v).    
$
We then reformulate lemma \ref{lemma:b_star} as a more usual Cauchy problem:
\vspace{5mm}
\begin{lemma}
\label{lemma:cauchy}
   The value function $V^\star$ is the solution of the ordinary differential equation 
\begin{align*}
\label{eq:system}
\tag{$\mathcal{F}_{y_0}$}
    \left\{
    \begin{array}{ll}
        \dot{Y}_t = \Phi(t, Y_t, y_0) \\
        Y_0=y_0 
    \end{array}
\right.
\end{align*}
   for some $y_0\in \RR_+$.
\end{lemma}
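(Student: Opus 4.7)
The plan is to combine Lemmas~\ref{lemma:ODE} and~\ref{lemma:b_star} and to observe that, for the optimal bid, the per-auction utility $U$ collapses to $\pi$. Concretely, I would start by applying Lemma~\ref{lemma:ODE} to the admissible bidding function $b^\star$ produced by Lemma~\ref{lemma:b_star}, giving
\[
V^{\star\prime}(\tau) \;=\; -\mu\, U\!\bigl(k(\tau)+V^\star(0)-V^\star(\tau),\,b^\star(\tau)\bigr) + \gamma V^\star(\tau).
\]

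I would then plug in the explicit form $b^\star(\tau)=\max\bigl(0,\,k(\tau)+V^\star(0)-V^\star(\tau)\bigr)$. Writing $v:=k(\tau)+V^\star(0)-V^\star(\tau)$, on the region where $v\ge 0$ the bid equals the modified value, so $U(v,b^\star(\tau))=U(v,v)=q(v)v-p(v)=\pi(v)$ by the very definition of $\pi$. Substituting turns the ODE into
\[
V^{\star\prime}(\tau) \;=\; \gamma V^\star(\tau) - \mu\pi(v) \;=\; \Phi\bigl(\tau,V^\star(\tau),V^\star(0)\bigr).
\]
Setting $y_0 := V^\star(0)$ makes the initial condition $V^\star(0)=y_0$ tautological and identifies the dynamics with $(\mathcal{F}_{y_0})$.

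The one bit of bookkeeping is the set where $v<0$: there $b^\star(\tau)=0$, so $U(v,0)=0$ (the bidder pays nothing and wins nothing, since the competition has non-negative support), while $\pi(v)=\int_0^v q(t)\,\dd t = 0$ under the same convention. Hence both sides of the identity reduce to $\gamma V^\star(\tau)$, and the formula $V^{\star\prime}(\tau)=\Phi(\tau,V^\star(\tau),V^\star(0))$ remains valid globally; the measurability/regularity needed to read this as an ODE is inherited from the proof of Lemma~\ref{lemma:ODE}.

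I do not expect a serious obstacle: the content of Lemma~\ref{lemma:cauchy} is interpretive rather than technical. It records that $V^\star$ solves an ordinary differential equation whose vector field depends on its own initial value, and that $V^\star(0)$ is the self-consistent choice of the parameter $y_0$. The usefulness of the reformulation shows up downstream, where one can decouple the problem by solving $(\mathcal{F}_{y_0})$ for a \emph{generic} $y_0\in\RR_+$ and only then pin down the correct one.
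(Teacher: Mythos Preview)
Your proposal is correct and matches the paper's approach: the paper does not even give Lemma~\ref{lemma:cauchy} a standalone proof, treating it as a direct reformulation of Lemma~\ref{lemma:b_star} (whose proof already plugs $b^\star$ into Lemma~\ref{lemma:ODE} and observes $U(b^\star_t,\cdot)=\pi(\cdot)$). Your handling of the region $v<0$ is a small extra piece of care that the paper leaves implicit.
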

\vspace{5mm}
\noindent It should be noted that in Lemma~\ref{lemma:cauchy}, the parameter $y_0$ is not given.
\subsection{Differential equation for $b$}
In this section, we provide a parametrization of the differential equation so that the $b^\star$ appears as solution.
Such parametrization requires additional assumptions on $k$.
The optimal bid also follows a differential equation:
\vspace{5mm}
\begin{lemma}
\label{lemma:b_dynamics}
If $k$ is $C^1$, then, on every  open on which  $b^\star(t) > 0$, the optimal bid $b^\star$ satisfies the differential equation
    $\dot{b}_t = \dot{k}_t-\gamma(V^\star_0+k_t) +\mu\pi(b_t)+\gamma b_t$. 
\end{lemma}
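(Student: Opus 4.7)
The plan is a short computation: on any open set where $b^\star(t)>0$, Lemma~\ref{lemma:b_star} gives the pointwise identity
\[
b^\star(t) = k(t) + V^\star(0) - V^\star(t),
\]
so $b^\star$ inherits the regularity of $V^\star$ and $k$. Since $V^\star$ solves the Cauchy problem $(\mathcal{F}_{y_0})$ of Lemma~\ref{lemma:cauchy}, it is (at least) $C^1$, and with $k\in C^1$ the above identity is differentiable termwise. The idea is simply to differentiate it, substitute the ODE for $V^\star$, and re-express $V^\star(t)$ in terms of $b^\star(t)$ using the same identity.

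More concretely, I would first differentiate the displayed identity to get $\dot b_t = \dot k_t - \dot V^\star_t$. Next I would plug in Lemma~\ref{lemma:cauchy}, which yields
\[
\dot V^\star_t \;=\; \Phi(t, V^\star_t, V^\star_0) \;=\; \gamma V^\star_t - \mu\,\pi\!\bigl(k_t + V^\star_0 - V^\star_t\bigr).
\]
On the open set under consideration, the argument of $\pi$ is exactly $b^\star_t$, so this simplifies to $\dot V^\star_t = \gamma V^\star_t - \mu\,\pi(b^\star_t)$. Finally, invert the pointwise identity to write $V^\star_t = k_t + V^\star_0 - b^\star_t$ and substitute, giving
\[
\dot b_t \;=\; \dot k_t - \gamma\bigl(k_t + V^\star_0 - b^\star_t\bigr) + \mu\,\pi(b^\star_t) \;=\; \dot k_t - \gamma(V^\star_0 + k_t) + \mu\,\pi(b_t) + \gamma b_t,
\]
which is the claimed ODE.

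There is essentially no technical obstacle: the one point worth checking is that we are allowed to differentiate at every point of the open set, which holds because $V^\star$ is $C^1$ (being the solution of an ODE whose right-hand side is continuous in $t$ and locally Lipschitz in $v$, inherited from $q$ continuous and $k\in C^1$) and the $\max$ in the formula of Lemma~\ref{lemma:b_star} is inactive there. The restriction to the region $\{b^\star>0\}$ is therefore exactly what is needed to remove the kink introduced by the $\max$.
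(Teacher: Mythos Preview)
Your proof is correct and follows essentially the same route as the paper: differentiate the identity $b^\star_t = k_t + V^\star_0 - V^\star_t$ from Lemma~\ref{lemma:b_star}, substitute the ODE for $V^\star$ from Lemma~\ref{lemma:cauchy}, and eliminate $V^\star_t$ via the same identity. Your added remarks on regularity and on why the restriction to $\{b^\star>0\}$ removes the kink from the $\max$ are a welcome clarification, but the argument itself matches the paper's.
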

\vspace{5mm}
Since $k$ is increasing with $\tau$, it may seem natural to expect that the optimal bid would also increase with $\tau$.
The following result confirms this intuition under the additional assumption that  $k$ is concave.
\vspace{5mm}
\begin{theorem}
\label{th:monotony_of_b}
If $k$ is concave, then $b^\star$ is increasing with $\tau$, and strictly increasing on any interval where $k$ strictly increases. 
\end{theorem}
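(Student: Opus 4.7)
My plan is to establish the non-negativity of $\phi := \dot{b}^\star$ by deriving a first-order linear ODE for $\phi$ and exploiting the sign of its source term. I restrict attention to a connected component of the open set $\{b^\star > 0\}$, since on $\{b^\star = 0\}$ the monotonicity of $b^\star$ is immediate. On $\{b^\star > 0\}$, Lemma~\ref{lemma:b_star} yields $V^\star_\tau = k_\tau + V^\star_0 - b^\star_\tau$; substituting this into Lemma~\ref{lemma:b_dynamics} cancels the $V^\star_0 + k_\tau$ and $\gamma b_\tau$ contributions and leaves the reduced form
\[
\dot{b}^\star_\tau = \dot{k}_\tau - \gamma V^\star_\tau + \mu \pi(b^\star_\tau).
\]
Assuming momentarily that $k\in C^2$, I differentiate once more and use $\dot{V}^\star_\tau = \gamma V^\star_\tau - \mu \pi(b^\star_\tau)$ (which follows from Lemma~\ref{lemma:ODE} via $U(b^\star,b^\star)=\pi(b^\star)$) together with $\pi'=q$ to obtain
\[
\dot{\phi}_\tau - (\gamma + \mu\, q(b^\star_\tau))\,\phi_\tau \;=\; \ddot{k}_\tau - \gamma \dot{k}_\tau.
\]
The crucial observation is that under the hypotheses on $k$ (concave and non-decreasing), the right-hand side is pointwise $\le 0$, and strictly $<0$ wherever $\dot{k}>0$.

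With $A(\tau) := \int_0^\tau (\gamma + \mu\, q(b^\star_s))\,\dd s$, the ODE rewrites as $\tfrac{\dd}{\dd \tau}\bigl(e^{-A(\tau)}\phi_\tau\bigr) = e^{-A(\tau)}(\ddot{k}_\tau - \gamma \dot{k}_\tau)\le 0$, so $\psi := e^{-A}\phi$ is non-increasing. Since $A(\tau)\ge \gamma\tau \to \infty$ and $\phi$ is bounded (because $b^\star$, $V^\star$ and the right-hand side of the reduced ODE are all bounded), $\psi(\tau) \to 0$ as $\tau \to \infty$; combined with the monotonicity of $\psi$, this forces $\psi \ge 0$, hence $\phi \ge 0$, which is the desired monotonicity. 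The strict statement follows from the integral representation
\[
\phi_\tau \;=\; -\,e^{A(\tau)}\int_\tau^\infty e^{-A(s)}\,(\ddot{k}_s - \gamma \dot{k}_s)\,\dd s,
\]
since on any open interval $I$ where $\dot{k}>0$ the integrand is strictly negative on $I$, giving $\phi_\tau>0$ for $\tau \in I$.

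Two technicalities remain. First, removing the $C^2$ assumption: for merely concave $k$ the quantity $\ddot{k}$ exists only as a non-positive Radon measure, so I would either read the ODE for $\phi$ distributionally (the sign argument persists) or approximate $k$ by a sequence of smooth concave non-decreasing functions $k_n \to k$, apply the above, and pass to the limit using the continuous dependence of the solution of the Cauchy problem of Lemma~\ref{lemma:cauchy} on the data. Second, the boundary value $\psi(\infty) = 0$ relies on $V^\star$ and $b^\star$ having limits at infinity, which can be read off the autonomous asymptotic regime of the ODE of Lemma~\ref{lemma:cauchy} as $k_\tau \to k_\infty$; on any bounded component of $\{b^\star > 0\}$ terminating at some $\tau^\dagger$ with $b^\star(\tau^\dagger) = 0$, one replaces $+\infty$ by $\tau^\dagger$ and the same conclusion holds. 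I regard formalizing this approximation/boundary step as the main obstacle of the proof; the ODE-level argument itself is short and robust.
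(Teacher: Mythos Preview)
Your argument differentiates the ODE from Lemma~\ref{lemma:b_dynamics} once more to obtain a linear first-order equation for $\phi=\dot b^\star$, then uses an integrating factor and the sign of the source term $\ddot k-\gamma\dot k\le 0$. This is a genuinely different route from the paper, which instead writes $\dot b^\star_t = f(t)+g(b^\star_t)$ with $f(t)=\dot k_t-\gamma(V_0^\star+k_t)$ decreasing in $t$ and $g(b)=\mu\pi(b)+\gamma b$ increasing in $b$, and argues by contradiction: if $\dot b^\star<0$ at some point then both summands can only decrease further along the trajectory, so $\dot b^\star$ stays negative, $b^\star$ eventually hits $0$, the bid remains $0$ thereafter, and this contradicts optimality. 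Your approach is more quantitative and yields the explicit integral representation of $\phi$; the paper's is more elementary, needs only $\dot k$ decreasing (no $\ddot k$), and crucially closes the argument by invoking the \emph{optimality} of $b^\star$ rather than mere boundedness.

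There is one genuine gap, exactly where you flag it. On a bounded component $(a,\tau^\dagger)$ of $\{b^\star>0\}$ you propose to ``replace $+\infty$ by $\tau^\dagger$ and the same conclusion holds''. But at $\tau^\dagger$ one only knows $\phi(\tau^\dagger{}^-)\le 0$ (since $b^\star>0$ to the left and $b^\star(\tau^\dagger)=0$), hence $\psi(\tau^\dagger)\le 0$, not $=0$; combined with ``$\psi$ non-increasing'' this does \emph{not} force $\psi\ge 0$ on $(a,\tau^\dagger)$. So your argument as written only settles the component that extends to $+\infty$. To exclude bounded components you still need an extra ingredient --- either the paper's optimality contradiction (a tail with $b^\star\equiv 0$ yields zero future return, which cannot be optimal), or an observation such as: on $\{b^\star=0\}$ one has $\dot V^\star=\gamma V^\star$ (when $q(0)=0$), so an unbounded zero-tail would make $V^\star$ blow up, contradicting its boundedness. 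Once bounded components are ruled out, your integrating-factor argument on the single component $(\tau_0,\infty)$ is correct and clean, and the strict-inequality step follows as you wrote.
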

\vspace{5mm}
\noindent  We insist that this is not true without the concavity hypothesis: Fig.~\ref{fig:counter-example} proposes a counter example where $k$ is increasing but not concave.
\subsection{Algorithm}
\label{sub:algo}
Lemma~\ref{lemma:cauchy} does not allow us to derive $V^\star$ from a Cauchy problem because the  initial condition is also a parameter of the dynamics. 
Still,  by the Cauchy-Lipschitz Theorem, the solution of the ordinary differential equation 
\begin{align*}
\tag{$\mathcal{F}_{y_0,\lambda}$}
    \left\{
    \begin{array}{ll}
        \dot{Y}_t = \Phi(t, Y_t,\lambda) \\
        Y_0=v_0
    \end{array}
\right.
\end{align*}
admits a unique maximal solution $Z^{y_0,\lambda}:t\to Z^{y_0,\lambda}(t)$ for any  $y_0>0$ and $\lambda>0$.
We also observe that the problem $\mathcal{F}_{y_0,y_0}$ is the same at the problem of equation $\mathcal{F}_{y_0}$, and we define $Z^v(t) \stackrel{def}{=} Z^{v,v}(t) $ for all $t\geq 0$.
In the following, we  assume that $q$ is continuous, and thus $\pi$  is $C^1$. This allows to apply the  differentiable dependency Theorem, which is the key to our approach.
Lemma~\ref{lemma:algo} is the most technical result of the paper,  
and is leveraged in the proof of Theorem~\ref{theorem:algo}.
\vspace{5mm}
\begin{lemma}
\label{lemma:algo}
Suppose $q$ is continuous.
The value
$V_0^\star$ is the unique  $v$ for which $\lim_{t\to+\infty} Z^{v}(t) )$ is finite. 
\end{lemma}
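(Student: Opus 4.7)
The plan is to split the statement into two halves: first, that $Z^{V_0^\star}$ converges to a finite limit; second, that every other choice of $v$ produces a trajectory that escapes. The first half is essentially an identification plus a convergence argument. By Lemma~\ref{lemma:cauchy}, $V^\star$ solves $\mathcal{F}_{V_0^\star}$, which, since the initial value and the parameter coincide, is the same as $\mathcal{F}_{V_0^\star,V_0^\star}$; hence $Z^{V_0^\star}=V^\star$. Because rewards are bounded and the horizon is discounted at rate $\gamma>0$, one checks directly from the probabilistic definition that $V^\star$ is bounded (e.g.\ $0\le V^\star(\tau)\le \mu\sup k/\gamma$). To promote boundedness to convergence I use that $k(\tau)\to k_\infty:=\sup k$, so the ODE $\dot Y=\Phi(t,Y,V_0^\star)$ is asymptotically autonomous with limiting field $f_\infty(Y)=\gamma Y-\mu\pi(k_\infty+V_0^\star-Y)$. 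Since $f_\infty'\ge\gamma>0$, $f_\infty$ is strictly increasing and has a unique zero $v_\infty$, and a standard asymptotic-autonomy argument then forces any bounded trajectory to converge to $v_\infty$.

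The second half is a direct comparison. For $v\ne V_0^\star$, set $D(t)=Z^v(t)-V^\star(t)$. Subtracting the two ODEs and applying the mean value theorem to $\pi$ (which is $C^1$ with $\pi'=q$) gives
\[
\dot D=(\gamma+\mu q(\xi(t)))\,D-\mu q(\xi(t))(v-V_0^\star),
\]
with $\xi(t)$ between $k_t+v-Z^v(t)$ and $k_t+V_0^\star-V^\star(t)$, so $q(\xi(t))\in[0,1]$. One computes $D(0)=v-V_0^\star$ and $\dot D(0)=\gamma(v-V_0^\star)$. A short contradiction argument shows that $|D(t)|\ge|v-V_0^\star|$ for every $t\ge 0$: at any first instant $t_0$ where $D$ tried to cross back to the value $v-V_0^\star$, the equation forces $\dot D(t_0)=\gamma(v-V_0^\star)$, whose sign prevents the crossing. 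Substituting this bound back into the differential equation yields $\dot D\ge\gamma D$ when $v>V_0^\star$ (and the symmetric inequality when $v<V_0^\star$), so Gronwall gives $|D(t)|\ge|v-V_0^\star|e^{\gamma t}\to\infty$. Because $V^\star$ is bounded, $Z^v(t)=V^\star(t)+D(t)$ then escapes to $+\infty$ or $-\infty$, and in particular has no finite limit.

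The step I expect to be the real obstacle is the convergence claim in the first half, not the divergence in the second. The comparison only shows that at most one $v$ yields a bounded trajectory; upgrading ``bounded'' to ``has a finite limit'' is delicate precisely because the limiting autonomous ODE's fixed point $v_\infty$ is \emph{unstable} ($f_\infty'\ge\gamma>0$), so no textbook stability theorem applies. The right route is to rule out both monotone drift away from $v_\infty$ and oscillation around it: any persistent overshoot $V^\star(\tau)>v_\infty+\delta$ would, for large $\tau$, force $\dot V^\star\approx f_\infty(V^\star)$ to be uniformly positive and hence contradict boundedness, and an analogous argument rules out persistent undershoot; combining these with the fact that any crossing of a level $c\ne v_\infty$ would have to be compatible with the sign of $f_\infty(c)$ in the limit yields $V^\star(\tau)\to v_\infty$.
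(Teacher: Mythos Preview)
Your argument is correct and, in the divergence half, more elementary than the paper's. The paper proceeds via the differentiable-dependency theorem: it derives an explicit formula for $\partial_v Z^{v}(t)$ (the ``Taylor extension'' lemma), bounds it below by $e^{\gamma t}$ (the ``Bound on the derivative'' lemma), and integrates to obtain $Z^{v_2}(t)-Z^{v_1}(t)\ge(v_2-v_1)e^{\gamma t}$ for every pair $v_1<v_2$. Your subtraction argument reaches the same exponential separation, but only relative to the reference trajectory $V^\star=Z^{V_0^\star}$, using nothing beyond the monotonicity of $\pi$ (so that $D\ge v-V_0^\star$ forces $\dot D\ge\gamma D$); this suffices for the lemma and sidesteps the smooth-dependence machinery entirely. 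What the paper's route buys is a derivative estimate that is uniform in $v$, a slightly stronger standalone statement though not needed here. Conversely, you go further than the paper on the existence half: the paper's two-line proof asserts only that $V^\star$ is \emph{bounded} and then invokes the separation corollary, leaving the passage from ``bounded'' to ``has a finite limit'' implicit. Your asymptotic-autonomy sketch actually closes that gap, and the mechanism you outline is the right one in this one-dimensional setting: since $\Phi(t,\cdot,V_0^\star)\to f_\infty$ uniformly on the bounded range of $V^\star$ (because $\pi$ is $1$-Lipschitz and $k_t\to k_\infty$), any time $\tau_0$ large enough with $V^\star(\tau_0)>v_\infty+\delta$ forces $V^\star$ to stay above $v_\infty+\delta$ thereafter and hence diverge, so $\limsup V^\star\le v_\infty$; the symmetric bound gives $\liminf V^\star\ge v_\infty$.
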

\vspace{5mm}
We then use Lemma~\ref{lemma:algo} to prove that a simple dichotomy on the parameter allows us to solve the bidding problem for repeated auctions with dynamic value. 
\vspace{5mm}
\begin{theorem}[Algorithm]
\label{theorem:algo}
Suppose $q$ is continuous.
Let $y^{(n)}$  the iterations of  Algorithm~\ref{algo1}, then 
$
    ||V_0^\star-y^{(n)}(0)||= O(1/2^n).
$
\end{theorem}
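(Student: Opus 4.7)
The plan is to interpret Algorithm~\ref{algo1} as a bisection on the scalar parameter $v$, leveraging Lemma~\ref{lemma:algo} together with a comparison principle on the family $\{Z^v\}_v$ so that $V_0^\star$ is straddled by the current bracket at every iteration. The $O(2^{-n})$ rate is then the automatic consequence of the bracket length halving at each step.

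The crucial step is to promote Lemma~\ref{lemma:algo} into a usable sign dichotomy. That lemma partitions $\mathbb{R}_+\setminus\{V_0^\star\}$ into
\[
S_+=\{v:\lim_{t\to+\infty}Z^v(t)=+\infty\}, \quad S_-=\{v:\lim_{t\to+\infty}Z^v(t)=-\infty\},
\]
and I need to identify these as the two connected components of $\mathbb{R}_+\setminus\{V_0^\star\}$. Since $\pi$ is $C^1$, the differentiable dependency theorem ensures $v\mapsto Z^v(t)$ is continuous on the common interval of existence; I would then establish the strict comparison $v_1<v_2\Rightarrow Z^{v_1}(t)<Z^{v_2}(t)$ for every $t\geq 0$. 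The cleanest route is to study the sensitivity $\delta Y_t=\partial_v Z^v(t)$, which satisfies a linear, scalar, non-autonomous ODE of the form $\delta\dot Y_t=(\gamma+\mu q(\cdot))\,\delta Y_t-\mu q(\cdot)$ with $\delta Y_0=1$; integrating explicitly yields $\delta Y_t\geq 1$ for all $t\geq 0$, whence the comparison. Combined with the uniqueness in Lemma~\ref{lemma:algo}, this forces $S_-=[0,V_0^\star)$ and $S_+=(V_0^\star,+\infty)$.

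Given this dichotomy, the convergence argument is routine bookkeeping. Pick an initial bracket $[a_0,b_0]\ni V_0^\star$ with endpoints in opposite classes; one may take $a_0=0$ and any explicit upper bound on the Bellman value, for instance $b_0=\mu\|k\|_\infty/\gamma$, both easily checked to belong to $S_-$ and $S_+$ respectively. At step $n$, Algorithm~\ref{algo1} evaluates the midpoint $m_n=(a_n+b_n)/2$, integrates $\mathcal{F}_{m_n,m_n}$ until the blow-up direction is revealed, and retains the half-interval still containing $V_0^\star$. A straightforward induction gives $V_0^\star\in[a_n,b_n]$ and $b_n-a_n=(b_0-a_0)\,2^{-n}$, so $|V_0^\star-y^{(n)}(0)|\leq(b_0-a_0)\,2^{-n}=O(2^{-n})$, which is the claimed rate.

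The main obstacle is sustaining the sign dichotomy: the ODE $\mathcal{F}_{v,v}$ depends on $v$ both through the initial condition and through the parameter $\lambda=v$ entering $\pi(k_t+\lambda-Y_t)$, so a naive Gronwall comparison does not directly deliver monotonicity in $v$. The sensitivity computation above is what decouples the two effects and makes the monotonicity rigorous; once that is in place, the rest of the proof is essentially an application of Lemma~\ref{lemma:algo} combined with the elementary halving estimate of a bisection.
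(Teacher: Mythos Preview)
Your proposal is correct and follows essentially the same route as the paper: the variational equation for $\partial_v Z^v(t)$ (the paper's Lemmas~5--6 and Corollary~1) yields the monotone sign dichotomy around $V_0^\star$, after which the bisection estimate is immediate. The only minor difference is that the paper obtains the sharper bound $\partial_v Z^v(t)\ge e^{\gamma t}$ from the same explicit integration, which gives the $\pm\infty$ blow-up directly without having to appeal back to Lemma~\ref{lemma:algo}; your weaker $\delta Y_t\ge 1$ would need one extra line (e.g.\ noting that once $Z^v$ exceeds $\sup k+v$ the ODE reduces to $\dot Y=\gamma Y$) to rule out oscillation, but this is easily supplied.
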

\vspace{5mm}
\begin{algorithm}
\caption{Computing $V_0$}\label{algo1}
\begin{algorithmic}[1]
\Require $(a,b,N)$
\For{$n \leq N$}
        \State $c\leftarrow (a+b)/2$
        \State Compute $Z^{c,c}$, solution of $\mathcal{F}_{c}$
        \State $y^{(n)} \leftarrow Z^{c,c} $
\If{$t\to y^{(n)}(t)$}
\State $a\leftarrow c$
\Else
\State  $b\leftarrow c$ 
\EndIf
\EndFor
\end{algorithmic}
\end{algorithm}
\subsection{Numerical estimation of the cost of impatience}
Algorithm~\ref{algo1} is implemented in Python 3. 
We use \textsc{solve\_ivp}, an ODE  solver from \textsc{Scipy}~\cite{2020SciPy-NMeth} to numerically solve the ODE. We rely on the default Runge-Kutta method~\cite{dormand1980family,Lawrence1986SomePR}.
To test the different bidding policies, we use Lemma~\ref{lemma:solution_ode_bid} in the Appendix to say that, for $\gamma\to 0 $, we can compute the time average performance over a long enough period. 
For the experiments, reported in Table~\ref{table:cor}, we used $\gamma=0.01$ to construct the optimal policy. 
We ran simulation on this optimal policy and on the greedy policy. 
We tested several values for $\mu$, with $k(\tau)=1-e^{-\tau}$ and $k(\tau)=1-1/(1+\tau)$.
\begin{table}
\parbox{.45\linewidth}{
\begin{tabular}{@{}llllll@{}}
\toprule
$\mu$   & 0.1   & 1    & 5    & 10   & 100  \\ \midrule
optimal & 0.045 & 0.22 & 0.43 & 0.52 & 0.72 \\
greedy  & 0.043 & 0.20 & 0.34 & 0.38 & 0.46 \\ \bottomrule
\end{tabular}
}
\quad 
\parbox{.45\linewidth}{
\begin{tabular}{@{}llllll@{}}
\toprule
$\mu$   & 0.1   & 1    & 5    & 10   & 50   \\ \midrule
optimal & 0.035 & 0.16 & 0.32 & 0.40 & 0.69 \\
greedy  & 0.035 & 0.14 & 0.26 & 0.31 & 0.44 \\ \bottomrule
\end{tabular}}
\caption{ Estimated value per time unit with the greedy bidder and with the optimal bidder, for different values of $\mu$, and $\gamma$ close to 0 with $k(t)=1-e^{-t}$(left) and $k(t)=1-1/(1+t)$ (right) }
\label{table:cor}
\end{table}
\section{Shading strategy}
\label{section:shading-strategy}
A shading strategy consists in bidding a constant factor $\alpha$ (smaller than 1) of the value, \textit{i.e.} $b(\tau)= \alpha k(\tau)$. This class of strategy is of great practical and theoretical interest because it is simple to implement and to analyze. It also satisfies some theoretical properties. 
For instance, as mentioned in the introduction,  shading is used as a way to implement budget or ROI constraints. 
We  use in this section the shorthand notation $V_\alpha = V_{\alpha k,\gamma,\mu}$.
We found it notable that on the settings we tested, the class of shading strategies performs very well. It can be proven however that it is not optimal in general. 
Indeed, by Lemma~\ref{lemma:b_dynamics}, the optimal bidding policy satisfies
$\dot{b}_t = \dot{k}_t-\gamma(V^\star_0+k_t) +\mu\pi(b_t)+\gamma b_t$, which implies that  a shading policy $b_t = \alpha k_t$  that would outperform all other strategies  should satisfy
$\alpha \dot{k}_t = \dot{k}_t-\gamma(V^\star_0+k_t) +\mu\pi(\alpha k_t)+\gamma \alpha k_t 
$, which in general does not hold. 
\subsection{Special case}
We mention that there is an instance of the problem where the time average payoff of a shading strategy can be computed explicitly.
We use Theorem~\ref{th:special_case} to build a visualization of the average expected payoff, that we display in Fig.~\ref{fig:partial_gamma}. We check that shading strategies outperform truthfulness in
the presence of a value dynamic. We also check that the greater the arrival rate $\mu$, the stronger  the shading  should be (the $\alpha$ multiplier close to zero).
\begin{theorem}
\label{th:special_case}
If the competition is a uniform distribution on $[0,1]$, if $k_t=1-1/(1+t)$ 
then for all $\alpha$ satisfying $\mu\alpha>1$
\begin{align*}
 \lim_{\gamma\to 0 }  \gamma V_0^\alpha   = ( 1 - 1/2\alpha )(\mu\alpha)^{2}\frac{
\Gamma(\mu\alpha-1,\mu\alpha)}{\Gamma(\mu\alpha+1,\mu\alpha)},
\end{align*} 
where $\Gamma(\_,\_)$ is the upper incomplete gamma function.\footnote{defined as $\Gamma(s, x)=\int_x^{\infty} t^{s-1} e^{-t} \dd t$}
\end{theorem}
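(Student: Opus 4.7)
The plan is to identify $\lim_{\gamma \to 0} \gamma V_0^\alpha$ with the stationary time-averaged payoff of the shading policy, then compute this average explicitly via the upper incomplete gamma function. First I would observe that under $b(\tau) = \alpha k(\tau)$ and uniform competition $q(b) = b$ on $[0,1]$, the age $\tau$ is a piecewise-deterministic Markov process: it drifts at unit rate and resets to $0$ at effective hazard $w(\tau) = \mu\, q(\alpha k(\tau)) = \mu\alpha\tau/(1+\tau)$. Its stationary density is
\begin{equation*}
f(\tau) = \frac{1}{N}\exp\!\Bigl(-\int_0^\tau w(s)\,ds\Bigr) = \frac{1}{N}(1+\tau)^{\mu\alpha} e^{-\mu\alpha\tau}, \qquad N = \int_0^\infty(1+\tau)^{\mu\alpha} e^{-\mu\alpha\tau}\,d\tau.
\end{equation*}
Invoking an Abel/Tauberian argument (or the appendix lemma already used elsewhere in the paper), I would then write
\begin{equation*}
\lim_{\gamma\to 0}\gamma V_0^\alpha = \mu\int_0^\infty U(k(\tau),\alpha k(\tau))\,f(\tau)\,d\tau = \mu\alpha\bigl(1-\tfrac{\alpha}{2}\bigr)\int_0^\infty k(\tau)^2 f(\tau)\,d\tau,
\end{equation*}
using $U(v,\alpha v) = \alpha v^2 - (\alpha v)^2/2$ for the uniform case.

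Next, since $k(\tau)^2 = \tau^2/(1+\tau)^2$, the task reduces to evaluating $\int_0^\infty \tau^2(1+\tau)^{\mu\alpha-2} e^{-\mu\alpha\tau}\,d\tau$. The substitution $u = 1+\tau$ converts this to $e^{\mu\alpha}\int_1^\infty (u-1)^2 u^{\mu\alpha-2} e^{-\mu\alpha u}\,du$, and an affine substitution $v = \mu\alpha u$ rewrites each pure-power integral as $\int_1^\infty u^s e^{-\mu\alpha u}\,du = (\mu\alpha)^{-s-1}\Gamma(s+1,\mu\alpha)$. Expanding $(u-1)^2 = u^2 - 2u + 1$ yields three gamma terms of consecutive orders $\mu\alpha+1,\mu\alpha,\mu\alpha-1$; similarly $N = e^{\mu\alpha}(\mu\alpha)^{-\mu\alpha-1}\Gamma(\mu\alpha+1,\mu\alpha)$.

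The final step is to collapse the three-term sum using the recurrence $\Gamma(s+1,x) = s\Gamma(s,x) + x^s e^{-x}$ applied twice: the boundary contributions of the form $x^s e^{-x}$ cancel pairwise with opposite signs, so the sum reduces exactly to $(\mu\alpha)^{-\mu\alpha}\Gamma(\mu\alpha-1,\mu\alpha)$. Dividing by $N$ gives $\int k^2 f\,d\tau = \mu\alpha\,\Gamma(\mu\alpha-1,\mu\alpha)/\Gamma(\mu\alpha+1,\mu\alpha)$, and multiplying by the prefactor $\mu\alpha(1-\alpha/2)$ then delivers the claimed expression with $(\mu\alpha)^2$. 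The main obstacle I anticipate is precisely this algebraic collapse: three incomplete gamma terms of different orders must collapse cleanly to a single one, and one has to track the signs carefully to see the cancellation of the boundary pieces. The Tauberian step is standard given ergodicity of the PDMP, and the hypothesis $\mu\alpha > 1$ enters to keep every intermediate quantity positive and finite.
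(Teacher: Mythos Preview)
Your proposal is correct and follows essentially the same route as the paper: invoke the time-average limit formula (the paper's Lemma~\ref{lemma:limit_gamma}) to reduce $\lim_{\gamma\to 0}\gamma V_0^\alpha$ to a ratio of integrals against the weight $(1+\tau)^{\mu\alpha}e^{-\mu\alpha\tau}$, convert to incomplete gammas by the substitution $u=1+\tau$, and collapse the three resulting terms via the recurrence $\Gamma(s+1,x)=s\Gamma(s,x)+x^s e^{-x}$. The only cosmetic difference is that you phrase the weight as a PDMP stationary density while the paper works directly with the integral formula; the computations and the anticipated cancellation of boundary terms are identical.
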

\begin{wrapfigure}{h}{0.4\textwidth}
  \begin{center}
    \includegraphics[width=0.48\textwidth]{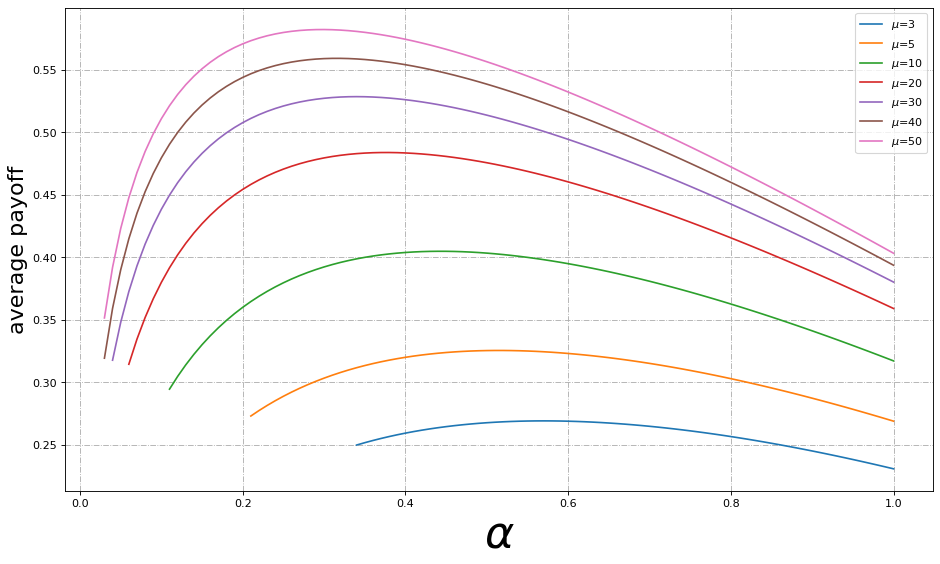}
  \end{center}
  \caption{Illustration of Theorem~\ref{th:special_case} }
         \label{fig:partial_gamma}
\end{wrapfigure}
\subsection{Numerical comparison}
Figure \ref{fig:perf_shading_exp} displays the proportion $\nicefrac{V_\alpha}{ V^\star }$  of the optimal value retrieved by bidding $\alpha k$, for several values of $\alpha$ and $\mu$, and two distinct concave functions $k$ (and with $\gamma$ was close to 0). We note that on all these settings, the best value of $\alpha$ gives results very close to the optimal value. (it was higher than $99\%$)
Another observation is that for smaller values of $\mu$,  
the greedy bidder (\textit{i.e.} $\alpha = 1$) performs near optimally. This observation is easily explained: a small $\mu$ means that there are few auctions per time unit; it is thus likely that the first auction only arrives a long time after $0$ , when the function $k(t)$ is already close to its sup. This is thus not so different from the case when $k$ is constant, for which the greedy bidder is optimal.
On the other hand, for larger values of $\mu$, the greedy bidder performs poorly; and the best shading factor decreases with the density of auctions $\mu$.
\begin{figure}
\begin{center}
\includegraphics[width=0.45\textwidth]{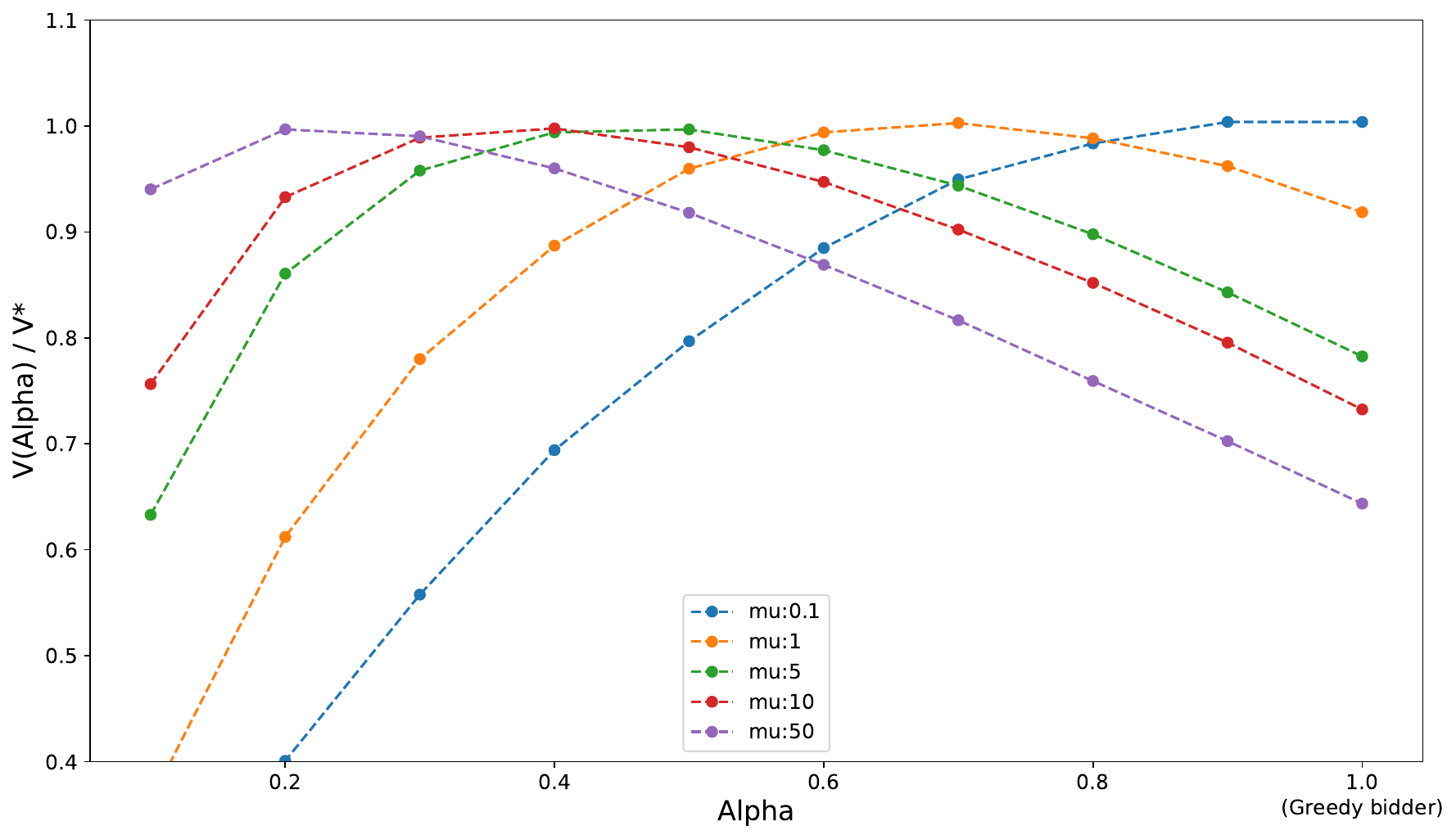}
\includegraphics[width=0.45\textwidth]{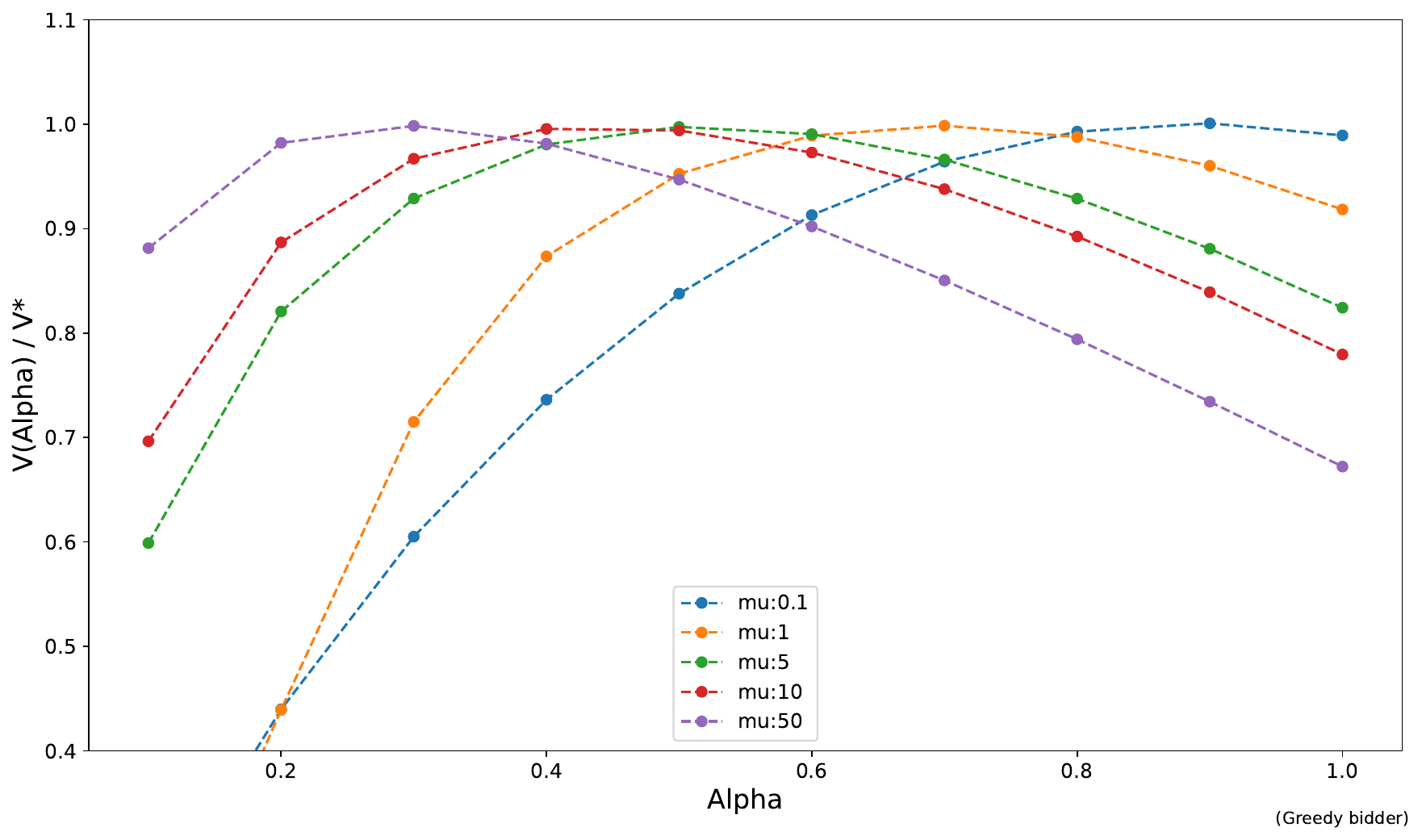}
\end{center}
\caption{Ratio $\nicefrac{ V_\alpha }{ V^\star }$ as a function of $\alpha$ with $k_\tau = 1-e^{-t}$ and $\mu = 5$ (left) and with $k(t) =  1 - \nicefrac{1}{1+t} $ }
\label{fig:perf_shading_exp}
\end{figure}
\paragraph{Example with $k(\tau)$ non concave}
We have seen in Lemma~\ref{lemma:b_star} that $b^\star$ is an increasing function when $k$ is concave. We propose here an example where $k$ is not concave to observe what can happen in this case. The function $k$ we used is depicted in Fig.~\ref{fig:counter-example}.
It is a simple piece-wise linear function, making two steps. The optimal bids as a function of $\tau$ are plotted for different values of $\mu$: we can observe that they are not monotonous. The intuition for this behavior is simple: since the function $k$ increases quickly near 0, it is worth winning early because this generates a high value per time unit. Thereafter, the function stops increasing until the second step, where it increases very sharply again. Just before the increase, it is better not to bid at all : the bidder should indeed avoid resetting the state to 0 just before this strong increase.
Fig.~\ref{fig:counter-example} shows the value obtained when submitting a shaded bid $k(t)\times\alpha$ , with the same two steps k function. We note that the best $V_\alpha$ here is quite sensibly smaller than $V^\star$.  We also note that for large values of $\mu$, $V_\alpha$ is not a concave function of $\alpha$ and may have several local maxima. 
This example also contradicts another assumption which could seem intuitive: $b^\star$ is \emph{not} an increasing function of $k$.  Indeed, if we removed the second step on the $k$ function above, then we would get a concave function, and the optimal bid would be positive for all $t>0$.
\begin{figure}
\begin{center}
\includegraphics[width=0.45\textwidth]{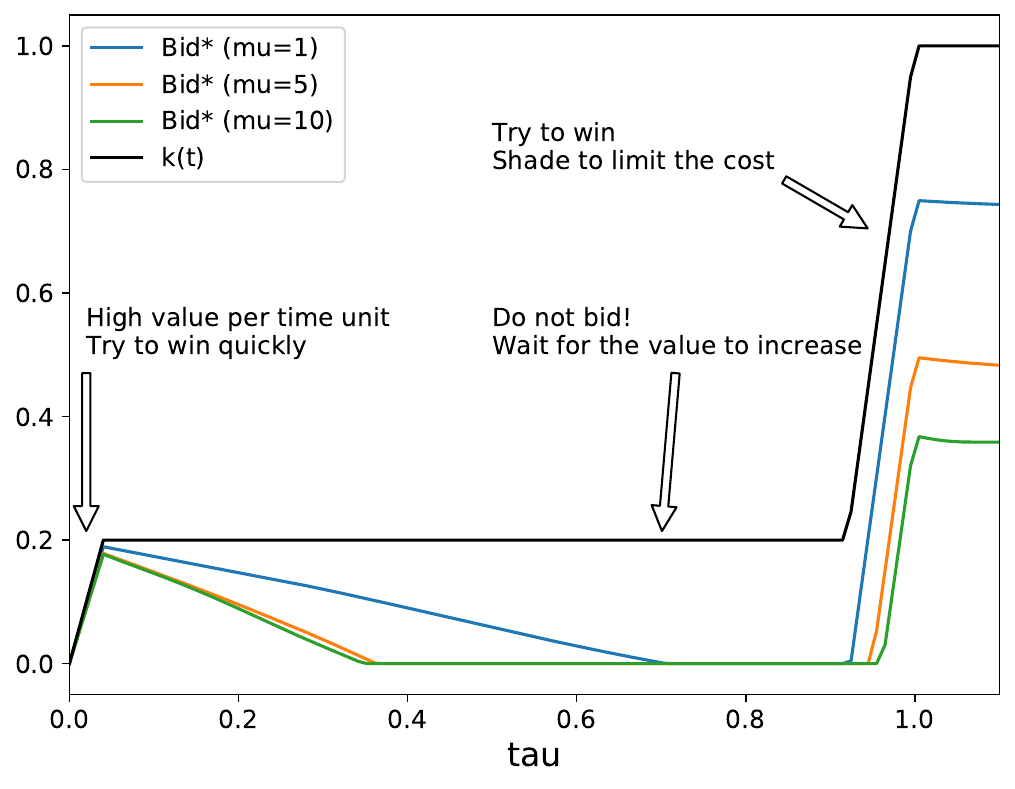}
\includegraphics[width=0.45\textwidth]{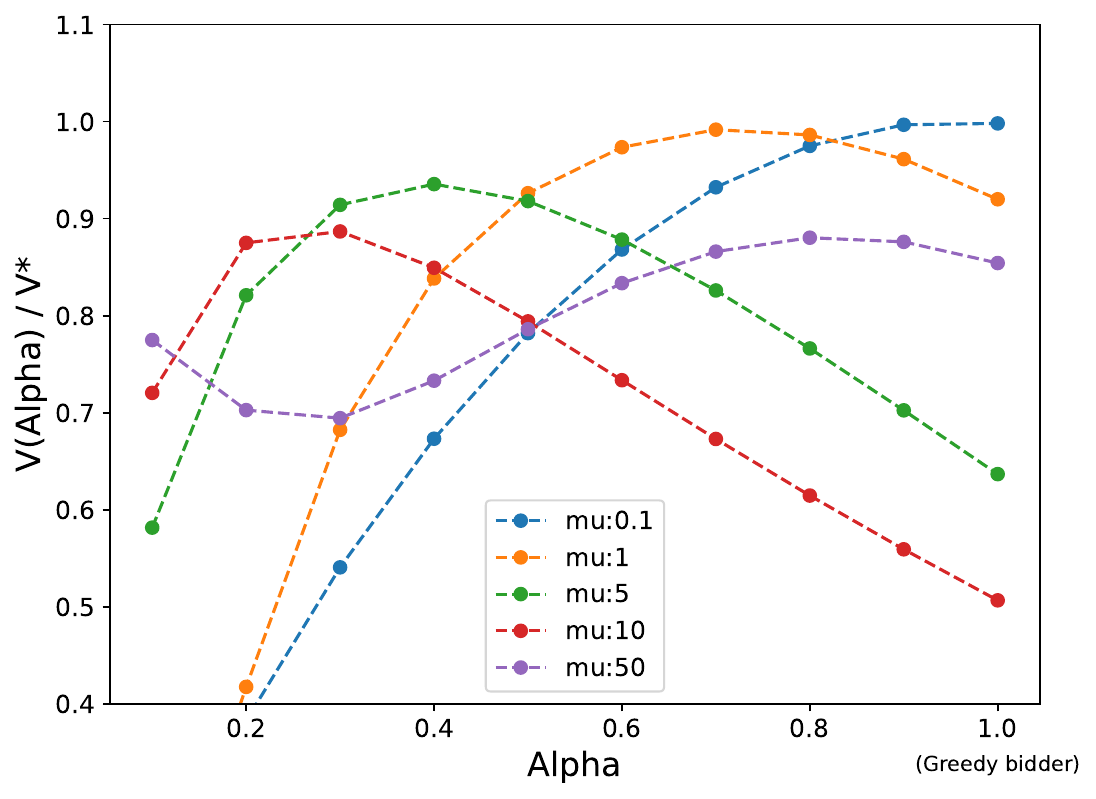}
\end{center}
\caption{(left)  $k$ and optimal bid as a function of $\tau$, for different values of $\mu$ (right). Optimal bids and $V_\alpha$ with the two steps function $k(t) = \min( 5t , 0.2 ) + \max(0,\min( 0.8, 10t - 9.2 ) $}
\label{fig:counter-example}
\end{figure}
\subsection{Asymptotic convergence}
Theorem~\ref{theorem:limit} provides  a complementary perspective on why  shading policies might work well.
Basically, as $\mu$ goes to $+\infty$, the best shading policy performs closer and closer to the optimal policy. 
\begin{theorem}
\label{theorem:limit}
Set $\mathfrak{p}(b) =\EE(C|C\leq b)$, where $C$ is the price to beat.
Suppose $k$ concave and such that $k\in C^1$  and  $k_0=0$.
Suppose $q\in C^1$, and $\dot{q}(0)>0$,
then 
\begin{align*}
  \sup_{b\in\mathcal{B}} \lim_{\mu\to +\infty}\lim_{\gamma\to 0 } \frac{V_{b,\gamma,\mu}(0)-  V_{k,\gamma,\mu}(0) }{ V_{b,\gamma,\mu}(0)}=
  \dot{\mathfrak{p}}(0)
   = \lim_{t\to 0}\frac{\dot{q}(t)\int_0^tq(s)\dd s }{q(t)^2}\ ,
\end{align*}
Moreover,
\begin{align*}
\sup_{b \in \mathcal{B}} \lim _{\mu \rightarrow+\infty} \lim _{\gamma \rightarrow 0} V_{b, \gamma, \mu}(0)
=
\sup _{\alpha \in [0,1]} \lim _{\mu \rightarrow+\infty} \lim _{\gamma \rightarrow 0}\ .
V_{\alpha \cdot k, \gamma, \mu}(0)
\end{align*}
\end{theorem}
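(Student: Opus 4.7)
The plan is to reduce the double limit $\lim_{\mu\to+\infty}\lim_{\gamma\to 0}V_{b,\gamma,\mu}(0)$ to a renewal--reward expression. For $b\in\mathcal{B}$, let $S(\tau)=\exp\bigl(-\mu\int_0^\tau q(b(s))\dd s\bigr)$ denote the probability that no auction has been won by age $\tau$. Lemma~\ref{lemma:solution_ode_bid} identifies the $\gamma\to 0$ limit with a time-average payoff,
\begin{align*}
\bar V(b):=\lim_{\gamma\to 0}\gamma V_{b,\gamma,\mu}(0)=\frac{\int_0^\infty\mu q(b(\tau))S(\tau)\bigl(k(\tau)-\mathfrak{p}(b(\tau))\bigr)\dd\tau}{\int_0^\infty S(\tau)\dd\tau},
\end{align*}
in which the denominator is the expected cycle length and the numerator the expected reward per cycle. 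As $\mu\to+\infty$ the mass of $S$ concentrates at $\tau=0$; this concentration is the engine of the proof.

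For the upper bound, combine $\mathfrak{p}(b)\geq 0$ with $\mu q(b)S=-S'$ and integrate by parts, using $k_0=0$ and $S(\infty)=0$ (for policies with $\int_0^\infty q(b)\dd s=+\infty$; otherwise $\bar V(b)=0$):
\begin{align*}
\bar V(b)\leq\frac{\int_0^\infty-S'(\tau)k(\tau)\dd\tau}{\int_0^\infty S(\tau)\dd\tau}=\frac{\int_0^\infty S(\tau)\dot k(\tau)\dd\tau}{\int_0^\infty S(\tau)\dd\tau}\leq\dot k(0),
\end{align*}
the final inequality holding because concavity of $k$ makes $\dot k$ non-increasing, so the weighted average of $\dot k$ against the probability measure $S\dd\tau/\int S\dd\tau$ cannot exceed its value at $0$. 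This bound is uniform in $\mu$, hence $\sup_b\lim_\mu\bar V(b)\leq\dot k(0)$.

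For the shading case $b=\alpha k$ with $\alpha\in(0,1]$, applying the same integration by parts to $\mathfrak{p}(\alpha k)$ as well (using $\mathfrak{p}(0)=0$) yields the \emph{exact} identity
\begin{align*}
\bar V(\alpha k)=\frac{\int_0^\infty S(\tau)\dot k(\tau)\bigl(1-\alpha\,\dot{\mathfrak{p}}(\alpha k(\tau))\bigr)\dd\tau}{\int_0^\infty S(\tau)\dd\tau}.
\end{align*}
Since $\mu q(\alpha k(\tau))$ is bounded below for $\tau$ bounded away from $0$, the probability measure $S\dd\tau/\int S\dd\tau$ concentrates at $\tau=0$ as $\mu\to+\infty$. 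Continuity of $\dot k$ and $\dot{\mathfrak{p}}$ at $0$ then yields $\lim_\mu\bar V(\alpha k)=\dot k(0)\bigl(1-\alpha\,\dot{\mathfrak{p}}(0)\bigr)$; in particular $\lim_\mu\bar V(k)=\dot k(0)(1-\dot{\mathfrak{p}}(0))$, and $\sup_{\alpha\in[0,1]}\lim_\mu\bar V(\alpha k)=\dot k(0)$, approached as $\alpha\to 0^+$.

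Combining these pieces gives both identities. Since shading is a subclass of $\mathcal{B}$, the upper bound $\dot k(0)$ is matched from below by shading, so $\sup_b\lim_\mu\bar V(b)=\sup_\alpha\lim_\mu\bar V(\alpha k)=\dot k(0)$, which is the second identity. For the first, $(V_b-V_k)/V_b=1-V_k/V_b$ is increasing in $V_b$ and $V_k$ is independent of $b$, so
\begin{align*}
\sup_b\lim_\mu\lim_\gamma\frac{V_b-V_k}{V_b}=1-\frac{\dot k(0)(1-\dot{\mathfrak{p}}(0))}{\dot k(0)}=\dot{\mathfrak{p}}(0).
\end{align*}
The main technical obstacle is a rigorous justification of the concentration in the shading step: I would split the integrals at a cutoff $\tau_\mu\to 0$ slow enough that $\mu\int_0^{\tau_\mu}q(\alpha k(s))\dd s\to+\infty$, treating the inner piece by continuity of the integrand at $0$ and the outer piece by the exponential decay of $S$ together with the boundedness of $k$.
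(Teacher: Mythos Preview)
Your argument is correct and follows the same overall strategy as the paper --- rewrite $\lim_{\gamma\to 0}\gamma V_{b,\gamma,\mu}(0)$ via the renewal--reward formula of Lemma~\ref{lemma:solution_ode_bid}, integrate by parts against $-S'=\mu q(b)S$, and use concentration of $S\dd\tau/\int S$ at $\tau=0$ as $\mu\to\infty$ --- but you organize the $\sup_b$ differently. The paper carries out the integration by parts for a \emph{generic} smooth policy $b$ with $b_t\le k_t$, obtaining directly
\[
\lim_{\mu\to\infty}\lim_{\gamma\to 0}\gamma V_{b,\gamma,\mu}(0)=\dot k_0-\dot b_0\,\dot{\mathfrak p}(0),
\]
and then observes that only $\dot b_0$ matters, so one may take $b=\alpha k$ without loss of generality. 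You instead split into (i) a uniform upper bound $\bar V(b)\le\dot k(0)$ coming from $\mathfrak p\ge 0$ together with concavity of $k$, and (ii) the exact shading computation. Your decomposition buys two things: it covers all measurable $b\in\mathcal B$ without smoothness, and it makes transparent where the concavity hypothesis on $k$ is actually used (the paper's proof never invokes it explicitly). The paper's route, on the other hand, yields the explicit dependence on $\dot b_0$ for arbitrary smooth policies, which is slightly more informative. Both lead to the same value $\dot k(0)(1-\alpha\dot{\mathfrak p}(0))$ for shading and hence to both identities. You should also record the elementary computation $\dot{\mathfrak p}(b)=\dot q(b)\int_0^b q / q(b)^2$, which the paper obtains by writing $\mathfrak p(b)=b-\int_0^b q/q(b)$ and differentiating.
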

\section{Conclusion}
Auction systems with dynamic values are everywhere.
This work is a first step toward better understanding them. 
Further research could include studying what happens when the bidder does not know the dynamics in advance, and needs to learn, as well as extending the results to more general auctions and more general value dynamics.

\clearpage
\newpage
{\begin{center}\Large\textbf{Appendix \\ -- \\ }\end{center}}\vspace{1cm}
\appendix
\section*{Proof of Section 2}
\paragraph{Expected value when an auction starts in state $\tau$}
Let  $W_{b}(\tau)$ be the expected revenue of the bidder when an auction is received in state $\tau$.
Then $W_{b}(\tau)$ can be written by taking the expectation on the outcome  of the immediate auction.  On average, the bidder pays the second price $p( b(\tau) )$. We then consider two possible outcomes. 
     Either the auction is lost: then the bidder pays 0, stays in state $\tau$, and its expected future reward is $V(\tau)$. 
     Or he wins the auction: the bidder then receives a reward $k(\tau)$ and the state is reset to $0$, which means that the bidder's expected reward after the auction is $V_b(0)$.
Wrapping all this together, we get
\begin{align}
 W_{b}(\tau)  &=  q( b(\tau) ) \cdot \big( k(\tau)  + V_b(0)  ) \big) - p( b(\tau) )  +  (1 - q( b(\tau) )) \cdot V_b(\tau) \nonumber \\
             &=  V_b(\tau) +  q( b(\tau) )  \cdot \big(  k(\tau)  + V_b(0) - V_b(\tau) \big) - p( b(\tau) ) \nonumber    \\
             &= V_b(\tau) + U\big(  k(\tau) + V_b(0) - V_b(\tau) , b(\tau) \big).  \label{eq:w_b} 
\end{align}
\paragraph{Bellman equation}
We note $T$ the waiting time until the next auction. For our stationary Poisson process, $T$ always follows an exponential distribution, of parameter $\mu$. 
We can write the evolution of the system as an expectation on $T$:
\begin{align}
V_b(\tau) =  \EE_T \big(  e^{- \gamma T} \cdot W_b( \tau + T )  \big ) 
          = \int_0^{+\infty}  e^{-(\gamma + \mu) t} \cdot \mu \cdot  W_b( \tau + t ) dt. \label{eq:v_b}
\end{align}
\subsection*{Lemma~\ref{lemma:ODE}}
\begin{proof}
    This is obtained directly by writing the evolution of $V$ from $\tau$ to $\tau + \delta$. 
        Similarly to \eqref{eq:v_b}, let $T$ the time of the next auction, and $\delta$ a positive time. We split the expectation of $V$ in two parts, when $T$ is either less than $\delta$ or larger than $\delta$.
        We have
        $  
        V_b(\tau) = \int_0^{\delta}  e^{-(\gamma + \mu)t} \cdot \mu \cdot W_b( \tau + t ) dt + e^{-(\gamma+\mu)\delta} \cdot V_b( \tau + \delta ).
        $
        From this we deduce that for all $\tau, \delta >0$
        $   
         \frac{ V_b(\tau +\delta) -  V_b(\tau)}{ \delta} = -\frac{1}{\delta} \int_0^{\delta}   e^{-(\gamma + \mu).t} \cdot \mu \cdot W_b( \tau + t ) dt +  \frac{1-e^{-(\gamma+\mu)\delta }}{\delta} V_b( \tau + \delta ).
        $
        The right-hand side is bounded, from which we first conclude that $V_b$ is continuous (and thus $W$ is continuous too); we may now take the limit when $\delta \rightarrow 0$ of the right-hand size, which is $ -\mu \cdot W_b( \tau ) + (\gamma + \mu) \cdot V_b(\tau) $. This proves both the derivability and the formula. 
\end{proof}
\section*{Proof of Section 3}
\subsection*{Lemma~\ref{lemma:b_star}}
\begin{proof}
     If a bidding function achieves the sup in the definition of $V^\star$, then, since the auction is second price, we can increase the expected return at any point $\tau$ by bidding the true value $ k(\tau) + V^\star(0) - V^\star(\tau)$. This implies that if there is an optimal bidding function, it can only be $b^\star$ as defined above.
    The existence of an optimal bidding function is a consequence of the classical result that the Bellman operator is a contraction: For any value function $\tau \to V(\tau)$, we define the bidding policy $B(V) := \tau \to \max \left(0;  k(\tau) + V(0) - V(\tau)  \right) $.
     Because the auction is second price, bidding with $B(V)$ is the optimal action in any state $\tau$ if  the value we get after this auction if defined by $V$.
     From this, we deduce that for any $b$, $ V_{B(V_b)} \ge V_b $, by recurrence on the number of steps where we use $B(V_b)$ instead of $b$.
     We also note that the operator $V \to V_{B(V)}$ is a contraction for the norm $\infty$ (eg: \cite{bertsekas2019reinforcement}), from which we deduce that any sequence defined by $V_{i+1} = B(V_i)$ will converge to the same value. Since this sequence is also increasing, it must converge to $V^\star$, and the associated bids converges to $b^\star := B( V^\star) $.
     We plug $b^\star$ in Lemma~\eqref{lemma:ODE} to get 
$    V_{t}^\star =\int_0^{+\infty}\mu e^{-(\mu+\gamma)t}
   \Big(
   U(b^\star_t,k_t+V_0^\star-V_t^\star)
   +V_t^\star\Big)
    \dd t$.
    This is maximized for $b^\star_t=k_t+V_0^\star-V_t^\star$, in which case, 
    $U(b^\star_t,k_t+V_0^\star-V_t^\star) = \pi(k_t+V_0^\star-V_t^\star)$
\end{proof}
\subsection*{Lemma~\ref{lemma:b_dynamics}}
\begin{proof}
From Lemma~\ref{lemma:b_star}, $b^\star_t = k_t+V_0^\star - V_t^\star$, hence since by Lemma~\ref{lemma:ODE}, $V^\star$ is differentiable, $\dot{b}^\star_t = \dot{k}_t- \dot{V}_t^\star$. 
Replacing in the latter expression $\dot{V}^\star_t$ by $\gamma V_t^\star-\mu \pi(k_t+V_0^\star -V_t^\star)$ from Lemma~\ref{lemma:cauchy}, we get
$\dot{b}^\star_t= \dot{k}_t-\gamma V_t^\star +\mu \pi(k_t+V_0^\star -V_t^\star)= \dot{k}_t-\gamma(k_t+V_0^\star-b_t^\star) +\mu \pi(b^\star_t)$.
Hence $ \dot{b}_t = \dot{k}_t-\gamma(V^\star_0+k_t) +\mu\pi(b_t)+\gamma b_t$.    
\end{proof}
\subsection*{Theorem~\ref{th:monotony_of_b}}
\begin{proof}
    When $k$ is smooth, the proof follows from the fact that $\dot{b}^\star(t)$ is the sum of $\dot{k}_t-\gamma(V^\star_0+k_t)$, which is (strictly) decreasing with $t$ under the hypotheses, and $\mu\pi(b^\star)+\gamma b^\star$ which varies in the same direction as $b^\star$.
We treat the case when $k$ is strictly increasing, (the case $k$ increasing is similar), and we proceed ad absurdum. 
Suppose that $b^\star$ is non-increasing on an interval $]t_1,t_2[$.
Then $b'$ must be strictly decreasing on this interval, and thus $b'(t_2) <0$. This ensures that the interval can be extended, and thus $b^\star$ would be decreasing up to $+\infty$, at a rate at least $b'(t_2)< 0$. This implies that the solution of the differential equation \ref{lemma:b_star} would go to $-\infty$.
We remind here that this differential equation \ref{lemma:b_star} is no longer valid when $b^\star$ takes negative values (the bid cannot become negative). But this still means that the bid would become 0 at some point $t_0$, and the same argument proves it cannot become positive again after that: the bid would thus be 0 on $]t_0;+\infty[$.
But bidding 0 on $]t_0, +\infty[$ would generate a return of 0; which clearly is not optimal.
We conclude that $b^\star$ must be strictly increasing.
For non-smooth $k$, the concavity still implies that $k$ must be $C^1$ with $k'$ decreasing on a dense subset of $R^+$ . The result  still holds by slightly adapting the previous proof, noting that $b'$ is defined on the same set as $k'$, and is equal to $k'$ plus a continuous function. This implies that if $b'$ decreasing and negative just before $t_2$, it must still be negative just after $t_2$.  
\end{proof}
Lemma~\ref{lemma:cauchy} does not allow us to derive $V^\star$ from a Cauchy problem because the  initial condition is also a parameter of the dynamics. 
Still,  by the Cauchy-Lipschitz Theorem, the solution of the ordinary differential equation 
\begin{align}
\label{eq:system_krikorian}
\tag{$\mathcal{F}_{y_0,\lambda}$}
    \left\{
    \begin{array}{ll}
        \dot{Y}_t = \Phi(t, Y_t,\lambda) \\
        Y_0=v_0
    \end{array}
\right.
\end{align}
admits a unique maximal solution $Z^{y_0,\lambda}:t\to Z^{y_0,\lambda}(t)$ for any  $y_0>0$ and $\lambda>0$.
We also observe that the problem $\mathcal{F}_{y_0,y_0}$ is the same at the problem of equation $\mathcal{F}_{y_0}$, and we define $Z^v(t) := Z^{v,v}(t) $ for all $t\geq 0$.
\begin{lemma}[Taylor extension]
\label{eq:lemma_Taylor}
Set $u_t = u_t(v)=k_t+v - Z^v(t)$, then 
    \begin{align*}
       \frac{\partial Z^v(t)}{ \partial v }  =
    e^{\gamma t +\mu\int_{0}^t  q( u_s(v)) \dd s }\big(1-
   \mu \int_{0}^t e^{-\gamma s -\mu\int_{0}^s q(u_l(v)) \dd l }  q( u_s(v)) \dd s\big)
    \end{align*} 
\end{lemma}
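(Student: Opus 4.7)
The plan is to apply the differentiable dependency theorem to the Cauchy problem $\mathcal{F}_{v,v}$, then recognise that the resulting ODE for $W(t) := \partial_v Z^v(t)$ is a scalar linear ODE which can be integrated explicitly. Since $q$ is assumed continuous, $\pi(v)=\int_0^v q(t)\,\mathrm{d}t$ is $C^1$ with $\pi'=q$, so $\Phi$ is $C^1$ jointly in $(Y,\lambda)$, and the classical differentiable-dependency theorem applies to both the initial condition and the parameter, guaranteeing that $W(t)$ is well-defined and $C^1$ in $t$.

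First I would formally differentiate the identity $\dot{Z}^v(t)=\gamma Z^v(t)-\mu\pi\bigl(k_t+v-Z^v(t)\bigr)$ with respect to $v$. The chain rule, combined with $\pi'=q$, gives
\begin{align*}
\dot{W}(t) \;=\; \gamma W(t)-\mu\, q(u_t(v))\bigl(1-W(t)\bigr)
\;=\; \bigl(\gamma+\mu q(u_t(v))\bigr)W(t)-\mu q(u_t(v)),
\end{align*}
and differentiating the initial condition $Z^v(0)=v$ gives $W(0)=1$. Note that in the problem $\mathcal{F}_{v,v}$ the parameter $\lambda$ and the initial value $y_0$ are both set to $v$, so the derivative with respect to $v$ collects contributions from both, which is precisely what yields the $-\mu q(u_t(v))$ source term in addition to the multiplicative factor $\mu q(u_t(v))W$.

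Second, I would solve this linear scalar ODE by the integrating-factor method. Setting $a(t):=\gamma+\mu q(u_t(v))$ and multiplying through by $\exp\bigl(-\int_0^t a(s)\,\mathrm{d}s\bigr)$ turns the equation into $\frac{d}{dt}\bigl(e^{-\int_0^t a}\,W\bigr)=-\mu q(u_t(v))\,e^{-\int_0^t a}$. Integrating from $0$ to $t$ using $W(0)=1$ and then multiplying back by $e^{\int_0^t a}$ yields exactly
\begin{align*}
W(t)\;=\;e^{\gamma t+\mu\int_0^t q(u_s(v))\,\mathrm{d}s}\left(1-\mu\int_0^t e^{-\gamma s-\mu\int_0^s q(u_l(v))\,\mathrm{d}l}\,q(u_s(v))\,\mathrm{d}s\right),
\end{align*}
which is the claimed formula.

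No step is a real obstacle here; the only subtle point worth being explicit about is the double appearance of $v$ (as initial condition and as parameter), which is the reason one must differentiate $\mathcal{F}_{y_0,\lambda}$ jointly along the diagonal $y_0=\lambda=v$. Once this is acknowledged, the formula for $W(t)$ is a straightforward integration of a linear first-order ODE, and the regularity hypotheses needed to justify all manipulations are provided by the continuity of $q$ (hence $C^1$ of $\pi$) already assumed in the lemma's context.
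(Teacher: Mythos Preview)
Your proposal is correct and follows essentially the same approach as the paper: invoke the differentiable dependency theorem for $\mathcal{F}_{y_0,\lambda}$, differentiate along the diagonal $y_0=\lambda=v$ (which you flag as the one subtle point), obtain the scalar linear ODE $\dot W=(\gamma+\mu q(u_t))W-\mu q(u_t)$ with $W(0)=1$, and integrate via the integrating factor. The paper carries out the diagonal differentiation more explicitly via $Z^v=Z^{\cdot,\cdot}\circ m$ with $m(v)=(v,v)$ and computes $D_y\Phi$, $D_\lambda\Phi$ separately, but the resulting ODE and its solution are identical to yours.
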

\begin{proof}
By the differentiable dependency Theorem~\cite{krikorian}, for any $t>0$, $(y_0,\lambda)\to Z^{y_0,\lambda}(t)$  is differentiable, and its differential maps a perturbation
$(\Delta v, \Delta \lambda)$ to $\Delta Z^{\Delta v, \Delta \lambda}_{y_0,\lambda}$, which is   the solution of 
\begin{align}
\label{eq:system_krikorian_2}
    \left\{
    \begin{array}{ll}
        \dot{\Delta Z}^{\Delta v, \Delta \lambda}_{y_0,\lambda}(t) =D_y \Phi(t, Z^{y_0,\lambda}(t),\lambda).\Delta Z_{y_0,\lambda}^{\Delta v, \Delta \lambda}(t) 
        +D_\lambda \Phi(t, Z^{y_0,\lambda}(t),\lambda).\Delta \lambda  \\
        \Delta Z^{\Delta v, \Delta \lambda}_{y_0,\lambda}(0)=\Delta v. 
    \end{array}
\right.
\end{align}
By definition $Z^v(t) = Z^{.,.}(t)\circ m(v)$ where $m(v) =(v,v)$, so that by derivation of a composition, 
\begin{align*}
    D_v (Z^v(t))[\Delta v] =D_{m(v)} Z^{.,.}(t).  D_v m [\Delta v]  
    =D_{m(v)} Z^{.,.}(t). (\Delta v,\Delta v) 
    =\Delta Z^{\Delta v, \Delta v}_{v,v}(t)
\end{align*}
therefore, using~\eqref{eq:system_krikorian_2},
\begin{align}
\label{eq:1234}
   \dot{D_v Z}^v(t)[\Delta v]=\dot{\Delta Z}^{\Delta v, \Delta v}_{v,v}(t) =
   D_y \Phi(t, Z^{v,v}(t),v).\Delta Z_{v,v}^{\Delta v, \Delta v}(t) 
        +D_\lambda \Phi(t, Z^{v,v}(t),v).\Delta v.
\end{align}
We now compute the two partial derivatives. First
since   $\Phi(t, y, \lambda) =\gamma y -\mu\pi(k_t+\lambda-y)$,
$D_y \Phi(t,y,\lambda)=\gamma + \mu\dot{\pi}(k_t+\lambda-y)$.
Now
since  $\pi(v)  =  q(v)v  -p( v)=\int_0^v q(t) \dd t$ we have $\dot{\pi}(v)=q(v)$, therefore
$D_y \Phi(t,y,\lambda)=\gamma + \mu q( k_t+\lambda-y)$
Similarly, 
$D_\lambda \Phi(t,y,\lambda)= -\mu\dot{\pi}(k_t+\lambda-y)= -\mu q(k_t+\lambda-y)$.
Therefore, combining with~\eqref{eq:1234}
\begin{align*}
  \dot{D_v Z}^v(t)[\Delta v]=
  D_y \Phi(t, Z^{v,v}(t),v).\Delta Z_{v,v}^{\Delta v, \Delta v}(t) 
        +D_\lambda \Phi(t, Z^{v,v}(t),v).\Delta v\\
        =    (\gamma + \mu q( k_t+v-Z^{v,v}(t))).\Delta Z_{v,v}^{\Delta v, \Delta v}(t) 
  -\mu q( k_t+v-Z^{v,v}(t))\Delta v\\
=    (\gamma + \mu q( k_t+v- Z^v(t) )).D_v ( Z^v(t) )[\Delta v] 
  -\mu q( k_t+v- Z^v(t) )\Delta v
\end{align*}
We observe that the equation on $D_v Z^v(t)[\Delta v]$ is of the shape
    $\dot{X}_t = a_t X_t + b_t$ with     $X_0 = \Delta v$.
The solution of such ODE is 
$
    X_t = e^{\int_{0}^t a_s \dd s }(\Delta v + \int_{0}^t e^{-\int_{0}^s a_u \dd u }b_s \dd s).
$
Therefore,
\begin{align*}
    &D_v Z^v(t)[\Delta v]= \\
    &
    e^{\int_{0}^t  (\gamma + \mu q( k_s+v-w_s(v))) \dd s }\Big(\Delta v 
    -
    \int_{0}^t e^{-\big(\int_{0}^s (\gamma + \mu q( k_l+v-w_l(v))) \dd l\big) } \mu q( k_s+v-w_s(v))\Delta v \dd s\Big)\\
    &= \Big(
    e^{\gamma t +\mu\int_{0}^t  q( u_s(v)) \dd s }\big(1-
   \mu \int_{0}^t e^{-\gamma s -\mu\int_{0}^s q(u_l(v)) \dd l }  q( u_s(v)) \dd s\big)\Big) \Delta v
\end{align*}
Hence
$
            D_v Z^v(t) =
    e^{\gamma t +\mu\int_{0}^t  q( u_s(v)) \dd s }\big(1-
   \mu \int_{0}^t e^{-\gamma s -\mu\int_{0}^s q(u_l(v)) \dd l }  q( u_s(v)) \dd s\big).
$
\end{proof}
\begin{lemma}[Bound on the derivative]
\label{lemma:bound}
For all $t\geq 0$, $\frac{\partial Z^v(t)}{ \partial v } \geq e^{\gamma t}$.
\end{lemma}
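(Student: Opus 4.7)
The plan is to argue directly from the closed form produced in Lemma~\ref{eq:lemma_Taylor} by a short integration-by-parts that transforms the apparently-cancelling expression into a manifestly non-negative one.

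Write $Q(t) := \mu \int_0^t q(u_s(v))\,ds$, so that $Q$ is non-decreasing (since $q \geq 0$ and $\mu \geq 0$) and $Q'(s) = \mu q(u_s(v))$. The formula in Lemma~\ref{eq:lemma_Taylor} then reads
\begin{align*}
\frac{\partial Z^v(t)}{\partial v} \;=\; e^{\gamma t + Q(t)}\Big(1 - \int_0^t e^{-\gamma s - Q(s)}\,Q'(s)\,ds\Big).
\end{align*}
Noticing that $e^{-Q(s)} Q'(s)\,ds = -d\bigl(e^{-Q(s)}\bigr)$, I would integrate by parts with $u = e^{-\gamma s}$ and $dv = -d(e^{-Q(s)})$. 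This collapses the bracketed factor to
\begin{align*}
1 - \int_0^t e^{-\gamma s - Q(s)} Q'(s)\,ds \;=\; e^{-\gamma t - Q(t)} + \gamma \int_0^t e^{-\gamma s - Q(s)}\,ds,
\end{align*}
and therefore
\begin{align*}
\frac{\partial Z^v(t)}{\partial v} \;=\; 1 + \gamma\,e^{\gamma t + Q(t)} \int_0^t e^{-\gamma s - Q(s)}\,ds.
\end{align*}

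From here the bound is immediate: since $Q$ is non-decreasing, $Q(s) \leq Q(t)$ for $s \in [0,t]$, whence
\begin{align*}
\gamma\, e^{\gamma t + Q(t)} \int_0^t e^{-\gamma s - Q(s)}\,ds \;\geq\; \gamma\, e^{\gamma t} \int_0^t e^{-\gamma s}\,ds \;=\; e^{\gamma t} - 1,
\end{align*}
which gives $\partial_v Z^v(t) \geq e^{\gamma t}$. The $\gamma = 0$ case follows by continuity (or by reading the bound as $\partial_v Z^v(t) \geq 1$ when $\gamma = 0$, which is exactly $e^{0\cdot t}$).

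The only non-routine step is recognising the integration by parts that turns the difference of two positive quantities into a sum of positive quantities; once that rewriting is performed, monotonicity of $Q$ does all the remaining work, so I do not expect any serious obstacle. An alternative, equally short route would be to observe that $f(t) := \partial_v Z^v(t)$ satisfies the linear ODE $\dot f = (\gamma + \mu q(u_t)) f - \mu q(u_t)$ with $f(0) = 1$, set $h(t) := e^{-\gamma t} f(t) - 1$ so that $h(0) = 0$ and $\dot h = \mu q(u_t)\,h + \mu q(u_t)(1 - e^{-\gamma t})$, then apply Duhamel to read off $h(t) \geq 0$ from the non-negativity of the forcing term; this is essentially the same argument in ODE clothing.
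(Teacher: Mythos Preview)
Your proof is correct and rests on the same key observation as the paper's, namely that $\mu q(u_s(v))\,e^{-Q(s)}\,ds = -\,d\bigl(e^{-Q(s)}\bigr)$. The paper simply bounds $e^{-\gamma s}\le 1$ inside the integral first and then integrates exactly to get $e^{\gamma t}$ in one step, whereas you integrate by parts first to obtain the exact identity $\partial_v Z^v(t)=1+\gamma\,e^{\gamma t+Q(t)}\int_0^t e^{-\gamma s-Q(s)}\,ds$ and then use monotonicity of $Q$; these are minor variants of the same argument.
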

\begin{proof}
By Lemma~\ref{eq:lemma_Taylor},
\begin{align*}
            &\frac{\partial Z^v(t)}{ \partial v } =
    e^{\gamma t +\mu\int_{0}^t  q( u_s(v)) \dd s }\big(1-
   \mu \int_{0}^t e^{-\gamma s -\mu\int_{0}^s q(u_l(v)) \dd l }  q( u_s(v)) \dd s\big)\\
   &\geq
    e^{\gamma t +\mu\int_{0}^t  q( u_s(v)) \dd s }\big(1-
\int_{0}^t e^{ -\mu\int_{0}^s    q(u_l(v)) \dd l } \mu  q( u_s(v)) \dd s\big)\\
&=
    e^{\gamma t +\mu\int_{0}^t  q( u_s(v)) \dd s }\big(1+
[ e^{ -\mu\int_{0}^s    q(u_l(v)) \dd l }]_{0}^t\big)\\
&=    e^{\gamma t +\mu\int_{0}^t  q( u_s(v)) \dd s }\big(
e^{ -\mu\int_{0}^t    q(u_l(v)) \dd l }\big)
=e^{\gamma t }.
\end{align*}
\end{proof}
\begin{corollary}[Uniqueness of the stable solution]
\label{corrolary:uniqueness_stable}
There can be at most one $y_0\in\RR$ such that $ \lim_{t\to +\infty} Z^{y_0}(t) $ is finite.
\end{corollary}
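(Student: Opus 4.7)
The plan is to argue by contradiction, exploiting the exponential lower bound on the sensitivity $\partial Z^v(t)/\partial v$ established in the previous lemma. The intuition is simple: if two different starting values both produced bounded trajectories, then integrating the derivative bound along the segment between them would force their difference to blow up like $e^{\gamma t}$, incompatible with both limits being finite.

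Concretely, suppose for contradiction that there exist $y_0 < y_0'$ in $\mathbb{R}$ such that both $\lim_{t\to +\infty} Z^{y_0}(t)$ and $\lim_{t\to +\infty} Z^{y_0'}(t)$ are finite. By the differentiable dependency theorem the map $v \mapsto Z^v(t)$ is $C^1$ on $[y_0, y_0']$ for every fixed $t \geq 0$, so the fundamental theorem of calculus gives
\begin{align*}
Z^{y_0'}(t) - Z^{y_0}(t) = \int_{y_0}^{y_0'} \frac{\partial Z^v(t)}{\partial v}\, \dd v.
\end{align*}
Applying the pointwise bound $\partial Z^v(t)/\partial v \geq e^{\gamma t}$ from Lemma~\ref{lemma:bound}, we obtain
\begin{align*}
Z^{y_0'}(t) - Z^{y_0}(t) \geq (y_0' - y_0)\, e^{\gamma t}.
\end{align*}

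Letting $t \to +\infty$, the right-hand side diverges to $+\infty$ since $\gamma > 0$ and $y_0' - y_0 > 0$, whereas the left-hand side would tend to the finite limit $\lim_t Z^{y_0'}(t) - \lim_t Z^{y_0}(t)$. This contradiction proves uniqueness. The only subtlety, which is not really an obstacle, is to verify that the maximal solutions $Z^{y_0}$ and $Z^{y_0'}$ are both defined on all of $[0,+\infty)$ so that the comparison above makes sense; this follows immediately from our assumption that both have finite limits at infinity (a solution that remains bounded cannot blow up in finite time under the locally Lipschitz dynamics defined by $\Phi$). The heart of the argument is thus just the exponential amplification of initial perturbations encoded in Lemma~\ref{lemma:bound}.
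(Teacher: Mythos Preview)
Your proof is correct and follows essentially the same approach as the paper: both integrate the bound $\partial Z^v(t)/\partial v \ge e^{\gamma t}$ from Lemma~\ref{lemma:bound} over $[y_0,y_0']$ to obtain $Z^{y_0'}(t)-Z^{y_0}(t)\ge (y_0'-y_0)e^{\gamma t}$, and then observe that this is incompatible with both trajectories having finite limits. Your write-up is simply more explicit about the fundamental theorem of calculus step and about global existence, but the argument is the same.
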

\begin{proof}
We deduce from Lemma~\ref{lemma:bound}, that for any $t$, and any $v_1 < v_2$,
$ Z^{v_2}(t)  >  Z^{v_1}(t) + (v_2 - v_1) e^{\gamma t}   $.
Now, if $Z^{v_1}(t)$ converges to a finite value, then  $Z^{v_2}(t)$ goes to $+\infty$. This guarantees that $\lim_{t\to +\infty} Z^v(t)$ is finite for at most one $v$.
\end{proof}
\subsection*{Lemma~\ref{lemma:algo}}
\begin{proof}
    The value function satisfies $\mathcal{F}_{V_0}$, and is bounded and positive.
 By Corollary~\ref{corrolary:uniqueness_stable}, it is the only  bounded solution. 
\end{proof}
\subsection*{Theorem~\ref{theorem:algo}}
\begin{proof}
    By corollary~\ref{corrolary:uniqueness_stable}, the condition ``$t\to y^{(n)}(t)$ is diverging to $+\infty$" is equivalent to  $y^{(n)}(0)> V_0^\star$. 
Hence, if $V_0^\star$ is in $[a,b]$ at step $(n)$, it will also be in $[a,b]$ at step $(n+1)$. 
\end{proof}
\section*{Proof of Section 4}
Let  $T_w$ the random variable``age at the time of the next won auction".
We  set
$
   A(t) = \int_0^t \gamma + \mu q(b(s)) \dd s.
$
We also define the random variable $N_b$ as the  number of won auctions until the end of the sequence $T_\gamma$. We may now state how to compute $V_b$ and its initial value:
\begin{lemma}
\label{lemma:solution_ode_bid}
    The solution $V_b$ is given by:
    \begin{align}
         V_b(\tau) &= e^{A(\tau)} \left( V_b(0) - \int_0^\tau e^{-A(t)}\cdot \mu \cdot U( V_b(0) + k(t) , b(t))    dt \right) \label{eq:vb_integral} \\
         V_b(0) &= \EE(N_b) \times \EE \left( U( k(T_w), b(T_w)  )  | T_w < T_\gamma \right).   \label{eq:v0}
    \end{align}
\end{lemma}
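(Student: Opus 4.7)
The plan is to prove the two identities separately, with the first being a linear-ODE manipulation and the second a renewal-reward computation built on the memoryless property of $T_\gamma$.

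For equation \eqref{eq:vb_integral}, I would start from Lemma~\ref{lemma:ODE} and exploit the affine structure of $U$ in its first argument. Since $U(v,b)=q(b)v-p(b)$, we have $U\bigl(k(\tau)+V_b(0)-V_b(\tau),b(\tau)\bigr)=U\bigl(k(\tau)+V_b(0),b(\tau)\bigr)-q(b(\tau))\,V_b(\tau)$. Substituting this into the ODE gives
\begin{align*}
V_b'(\tau)=\bigl(\gamma+\mu q(b(\tau))\bigr)V_b(\tau)-\mu\,U\bigl(k(\tau)+V_b(0),b(\tau)\bigr),
\end{align*}
which is a linear first-order ODE in $V_b$ whose coefficient in front of $V_b$ is precisely $A'(\tau)$. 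Multiplying by the integrating factor $e^{-A(\tau)}$ turns the left-hand side into $\frac{d}{d\tau}\bigl(e^{-A(\tau)}V_b(\tau)\bigr)$; integrating from $0$ to $\tau$ and rearranging yields \eqref{eq:vb_integral}. The only thing to verify is the regularity needed to apply the fundamental theorem of calculus, which follows from continuity of $b$, $k$, $q$, $p$ on the relevant range, combined with the differentiability of $V_b$ already given by Lemma~\ref{lemma:ODE}.

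For equation \eqref{eq:v0}, the idea is to view $V_b(0)$ as a renewal-reward sum indexed by the wins. Let $S_1<S_2<\dots$ be the successive win times, $Y_j=S_j-S_{j-1}$ the inter-win intervals (which are i.i.d.\ copies of $T_w$ because age resets at each win and the competition is stationary), and interpret $e^{-\gamma T_i}$ as $\mathbb{P}(T_\gamma>T_i)$ by introducing $T_\gamma\sim\mathrm{Exp}(\gamma)$ independent of everything else. Then
\begin{align*}
V_b(0)=\EE\!\left[\sum_{j\ge 1}\mathbf{1}\{S_j<T_\gamma\}\,\text{reward}_j\right],
\end{align*}
where $\text{reward}_j$ depends only on $(Y_j,C_j^{\text{win}})$, which is independent across $j$. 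The memoryless property of $T_\gamma$ gives $\PP(S_j<T_\gamma)=\PP(T_w<T_\gamma)^j$, making $N_b$ geometric and decoupling each term into a product of $\PP(T_w<T_\gamma)^{j-1}$ and an expectation depending only on the $j$-th cycle. Summing the geometric series produces the announced factorization $V_b(0)=\EE(N_b)\cdot\EE(\cdot\mid T_w<T_\gamma)$.

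The main obstacle will be identifying the correct conditional expectation inside the second factor. The per-win expected reward is most naturally $k(T_w)-\EE[C\mid C<b(T_w)]$, which one has to reconcile with the $U\bigl(k(T_w),b(T_w)\bigr)$ form appearing in the statement by carefully accounting for the win probability $q(b(T_w))$ absorbed in the law of $T_w$; the computation hinges on recognising that the hazard rate of $T_w$ is $\mu q(b(\tau))$, so integrating against $T_w$'s density automatically multiplies the conditional reward by $q(b(\tau))$ and recovers $U$. Once this bookkeeping is clean, both identities follow.
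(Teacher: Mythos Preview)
Your argument for \eqref{eq:vb_integral} is exactly the paper's: recognise the ODE of Lemma~\ref{lemma:ODE} as linear in $V_b$ after splitting $U$ with its affine structure, then apply the integrating factor $e^{-A(\tau)}$.

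For \eqref{eq:v0} you take a genuinely different route. The paper stays analytic: it rewrites \eqref{eq:vb_integral} as
\[
e^{-A(\tau)}V_b(\tau)=V_b(0)\Bigl(1-\int_0^\tau e^{-A(t)}\mu q(b(t))\,dt\Bigr)-\int_0^\tau e^{-A(t)}\mu\,U(k(t),b(t))\,dt,
\]
sends $\tau\to\infty$ using boundedness of $V_b$ and $A(\tau)\ge\gamma\tau$, solves for $V_b(0)$, and only then identifies the two integrals as $\PP(T_w<T_\gamma)$ and $\PP(T_w<T_\gamma)\cdot\EE[U(k(T_w),b(T_w))\mid T_w<T_\gamma]$. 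Your renewal-reward decomposition proves \eqref{eq:v0} directly from the definition of $V_b(0)$, independently of \eqref{eq:vb_integral}, and the bookkeeping you flag is exactly right: the density of $T_w$ carries the factor $\mu q(b(t))$, so $q(b(t))\bigl(k(t)-\EE[C\mid C<b(t)]\bigr)=U(k(t),b(t))$ reappears in the integrand and you land on the same integral $\int_0^\infty e^{-A(t)}\mu\,U(k(t),b(t))\,dt$ that the paper obtains analytically. The paper's route is shorter once \eqref{eq:vb_integral} is in hand; yours gives a cleaner probabilistic picture and does not need the $\tau\to\infty$ limit or the boundedness argument.
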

\begin{proof}
   Equation \eqref{eq:vb_integral} is the integral solution of the linear differential equation of Lemma~\ref{lemma:ODE}. 
 Equation  \eqref{eq:v0} states that $V(0)$ is precisely the expected number of won auctions multiplied by the average return on those won auctions, which seem intuitively obvious. We may prove it more formally as follows, by first rewriting \eqref{eq:vb_integral} as :
    $$  e^{-A(\tau)} \cdot V_b(\tau)  = V_b(0) \cdot \left( 1 - \int_0^\tau e^{-A(t)}\cdot \mu \cdot q(b(t)) dt   \right) - \int_0^\tau e^{-A(t)}\cdot \mu \cdot U\left( k(t) , b(t)  \right)    dt \big) $$
Since we know that $ V_b(\tau)$ is bounded, and $ A(\tau) > \gamma.\tau \xrightarrow{ \tau \to \infty } +\infty  $, the left-hand side goes to $0$ when $\tau \to+\infty$ 
Thus
\begin{align*}
    V_b(0)  &= \frac{ \int_0^\infty e^{-A(t)}\cdot \mu \cdot U\left( k(t) , b(t)  \right)    dt  }{  1 - \int_0^\infty e^{-A(t)}\cdot \mu \cdot q(b(t)) dt   } \\
     &=  \frac{ \int_0^\infty e^{-A(t)}\cdot \mu \cdot U\left( k(t) , b(t)  \right)    dt  }{  \Pr(T_w < T_\gamma)   } \times \frac{\Pr(T_w < T_\gamma) }{1 -\Pr(T_w < T_\gamma)  }. \label{eq:v0_integral}
\end{align*}
Here we used the formula
$\int_0^\infty e^{-A(t)}\cdot \mu \cdot q(b(t)) dt = \Pr(T_w < T_\gamma)  $.
We conclude by identifying the two factors on the right side of equation \eqref{eq:v0_integral} as some expectations:
$$ \EE \left( U( k(T_w), b(T_w)  )  | T_w < T_\gamma \right)  =  \frac{ \int_0^\infty e^{-A(t)}\cdot \mu \cdot U\left( k(t) , b(t)  \right)    dt  }{  \Pr(T_w < T_\gamma)   }  $$
is the average return per won auction, and
$ \EE(N_b) = \frac{\Pr(T_w < T_\gamma) }{1 -\Pr(T_w < T_\gamma)  } $
is the expected number of won auctions. (We note that it follows a geometric law of parameter $\Pr(T_w < T_\gamma)$, hence the formula.)
\end{proof}
\paragraph{Limit when $\gamma$ goes to 0} 
Since the process ends after a time $T_\gamma$ following an exponential law, the expected lifetime is $\nicefrac{1}{\gamma}$.
Intuitively, increasing the length of the sequence should increase linearly the expected return, as with the same bidding strategy we will increase the number of won items without changing their average price or value. The following lemma confirms this intuition for long sequences, by proving that the average return per time unit, $\gamma\cdot V_{b,\gamma}(0)$ is converging:
\begin{lemma}
\label{lemma:limit_gamma}
\begin{align}
   \lim_{\gamma\to 0} \gamma V_{b,\gamma}(0) = \mu \frac{
      \int_0^{+\infty}e^{-\mu\int_0^t q(b_s)\dd s }U(k_t,b_t)\dd t 
   }
   {
   \int_0^{+\infty}e^{-\mu\int_0^t q(b_s)\dd s }\dd t 
   } 
\end{align}
\end{lemma}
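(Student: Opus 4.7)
The plan is to start from the closed-form expression for $V_{b,\gamma}(0)$ obtained inside the proof of Lemma~\ref{lemma:solution_ode_bid}, namely
\begin{align*}
V_{b,\gamma}(0) \;=\; \frac{\mu \int_0^{+\infty} e^{-A(t)}\, U(k_t, b_t)\,\dd t}{1 - \mu \int_0^{+\infty} e^{-A(t)}\, q(b_t)\,\dd t},
\end{align*}
and to separate out the $\gamma$-dependence by writing $A(t)=\gamma t + B(t)$ with $B(t) := \mu \int_0^t q(b_s)\,\dd s$. The goal is to expose an explicit factor of $\gamma$ in the denominator that will cancel the $\gamma$ I multiply on the left, leaving an expression in which the limit $\gamma \to 0$ can be passed through the integrals.

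The key step is an integration by parts in the denominator, using the identity $\mu q(b_t)\, e^{-B(t)} = -\frac{\dd}{\dd t} e^{-B(t)}$. With $u = e^{-\gamma t}$ and $\dd v = -\dd e^{-B(t)}$, and observing that the boundary term at $+\infty$ vanishes for any $\gamma>0$, I obtain
\begin{align*}
\mu \int_0^{+\infty} e^{-\gamma t - B(t)}\, q(b_t)\,\dd t \;=\; 1 - \gamma \int_0^{+\infty} e^{-\gamma t - B(t)}\,\dd t,
\end{align*}
so that the denominator of $V_{b,\gamma}(0)$ equals $\gamma \int_0^{+\infty} e^{-\gamma t - B(t)}\,\dd t$. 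Multiplying $V_{b,\gamma}(0)$ by $\gamma$ then cancels this factor and leaves
\begin{align*}
\gamma\, V_{b,\gamma}(0) \;=\; \frac{\mu \int_0^{+\infty} e^{-\gamma t - B(t)}\, U(k_t, b_t)\,\dd t}{\int_0^{+\infty} e^{-\gamma t - B(t)}\,\dd t}.
\end{align*}

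It then remains to pass $\gamma \to 0$ in numerator and denominator separately. Since $0 \le e^{-\gamma t} \le 1$ and $e^{-\gamma t} \uparrow 1$ as $\gamma \downarrow 0$, monotone (or dominated) convergence applies, with dominating envelopes $e^{-B(t)}\,U(k_t,b_t)$ and $e^{-B(t)}$ respectively, giving exactly the claimed ratio. The main obstacle is purely integrability: one needs $\int_0^{+\infty} e^{-B(t)}\,\dd t < +\infty$ (so that the limiting denominator is nonzero and finite) and the corresponding numerator to be finite as well. Under the standing assumptions of the paper ($k$ bounded, $q$ a CDF, and the bidder bidding positively often enough that $B(t) \to +\infty$ at a sufficient rate), this is straightforward; I would state it as an implicit regularity hypothesis and note that in the degenerate case where the bidder essentially never wins, both sides reduce to $0$ and the identity is vacuous.
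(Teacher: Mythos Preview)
Your proposal is correct and follows essentially the same route as the paper: start from the closed form for $V_{b,\gamma}(0)$ derived in Lemma~\ref{lemma:solution_ode_bid}, use the same integration by parts on the denominator to expose the factor $\gamma\int_0^{+\infty}e^{-\gamma t - B(t)}\,\dd t$, and then pass to the limit. You add more justification for the limit step (monotone/dominated convergence and the integrability caveat) than the paper, which simply writes ``take the limit in Equation~\eqref{eq:v0_integral}''.
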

\begin{proof}
We have 
$
   \Pr(T_w < T_\gamma) = \int_0^{+\infty}e^{-\gamma t } e^{-\int_0^t q(b_s)\dd s }\mu q(b_t))\dd t= 
   1- \gamma\int_0^{+\infty}e^{-\mu\int_0^t q(b_s)\dd s-\gamma t  }\dd t 
$ by integration by parts.
We then take the limit in Equation~\eqref{eq:v0_integral}.
\end{proof}
\subsection*{Theorem~\ref{th:special_case}}
\begin{proof}
    From Lemma~\ref{lemma:limit_gamma}, 
we have:
\begin{align*}
  \lim_{\gamma\to 0 }  \gamma V_0^\alpha  =( 1 - 1/2\alpha )\frac{ \int_0^\infty \alpha\mu k_t^2 e^{-(\mu\alpha\int_0^t k_s\dd s) }\dd t}{\int_0^\infty e^{-(\mu\alpha\int_0^t k_s\dd s) }\dd t}.
\end{align*}
Since  $k_t=1-\frac{1}{1+t}$, we have $ \int_0^t k_s\dd s= t -\ln(1+t)$
and
 $e^{-(\mu\alpha\int_0^t k_s\dd s) }=   e^{-\alpha \mu t}(1+t)^{\alpha \mu}$.
For  $m\geq 0$, we have
 \begin{align*}
& \int_0^{+\infty} (1+t)^{-m} e^{-\alpha \mu t}(1+t)^{\alpha \mu}\\
 &=\int_0^{+\infty}  e^{-\alpha \mu t}(1+t)^{\alpha \mu-m}\dd t 
 \\&=\int_1^{+\infty}  e^{-\alpha \mu (x-1)}x^{\alpha \mu-m}\dd x \\
&=e^{\alpha \mu }\int_1^{+\infty}  e^{-\alpha \mu x}x^{\alpha \mu-m}\dd x\\
\intertext{we set $\theta = (\alpha\mu)^{-1}$, $k =(\alpha \mu-m+1) $,$\tilde{\alpha}  =\alpha \mu$}
&=e^{\tilde{\alpha}}\int_1^{+\infty} x^{k-1} e^{-\frac{x}{\theta }}\dd x\\
&=e^{\tilde{\alpha}}\theta^{k}\int_{1/\theta}^{+\infty} t^{k-1} e^{-t}\dd t
\\&=e^{\tilde{\alpha}}\theta^{k}\Gamma(k,1/\theta)
=e^{\tilde{\alpha}}\tilde{\alpha}^{m-1-\tilde{\alpha}}\Gamma(\tilde{\alpha}-m+1,1/\theta).
\end{align*}
Therefore
\begin{align*}
&( 1 - 1/2\alpha )\frac{ \int_0^\infty \alpha\mu k_t^2 e^{-(\mu\alpha\int_0^t k_s\dd s) }\dd t}{\int_0^\infty e^{-(\mu\alpha\int_0^t k_s\dd s) }\dd t}\\
&=\alpha\mu( 1 - 1/2\alpha )\frac{\tilde{\alpha}^{-1-\tilde{\alpha}}\Gamma(\tilde{\alpha}+1,1/\theta)+\tilde{\alpha}^{1-\tilde{\alpha}}\Gamma(\tilde{\alpha}-1,1/\theta)-2\tilde{\alpha}^{-\tilde{\alpha}}\Gamma(\tilde{\alpha},1/\theta)}{\tilde{\alpha}^{-1-\tilde{\alpha}}\Gamma(\tilde{\alpha}+1,1/\theta)}\\
&=( 1 - 1/2\alpha )\frac{\tilde{\alpha}\Gamma(\tilde{\alpha}+1,1/\theta)+\tilde{\alpha}^{3}\Gamma(\tilde{\alpha}-1,1/\theta)-2\tilde{\alpha}^{2}\Gamma(\tilde{\alpha},1/\theta)}{\Gamma(\tilde{\alpha}+1,1/\theta)}\\
&=( 1 - 1/2\alpha )\frac{\tilde{\alpha}(\Gamma(\tilde{\alpha}+1,1/\theta)-\tilde{\alpha}\Gamma(\tilde{\alpha},1/\theta))-\tilde{\alpha}^{2}(\Gamma(\tilde{\alpha},1/\theta)
-\tilde{\alpha}\Gamma(\tilde{\alpha}-1,1/\theta))
}{\Gamma(\tilde{\alpha}+1,1/\theta)}
\end{align*}
Now, we use the property
$\Gamma(s+1,1/\theta) = s\Gamma(s,1/\theta) + \theta^{-s} e^{-1/\theta}$
to get 
\begin{align*}
  &  ( 1 - 1/2\alpha )\frac{\tilde{\alpha}(\Gamma(\tilde{\alpha}+1,1/\theta)-\tilde{\alpha}\Gamma(\tilde{\alpha},1/\theta))-\tilde{\alpha}^{2}(\Gamma(\tilde{\alpha},1/\theta)
-(\tilde{\alpha}-1)\Gamma(\tilde{\alpha}-1,1/\theta))
+\tilde{\alpha}^{2}\Gamma(\tilde{\alpha}-1,1/\theta))
}{\Gamma(\tilde{\alpha}+1,1/\theta)}\\
&= ( 1 - 1/2\alpha )\frac{\tilde{\alpha}(\theta^{-\tilde{\alpha}}e^{-1/\theta})-\tilde{\alpha}^{2}(\theta^{-(\tilde{\alpha}-1)}e^{-1/\theta})+
\tilde{\alpha}^{2}\Gamma(\tilde{\alpha}-1,1/\theta))
}{\Gamma(\tilde{\alpha}+1,\tilde{\alpha})}\\
&= ( 1 - 1/2\alpha )\frac{\tilde{\alpha}(\tilde{\alpha}^{\tilde{\alpha}}e^{-\tilde{\alpha}})-\tilde{\alpha}^{2}(\tilde{\alpha}^{\tilde{\alpha}-1}e^{-\tilde{\alpha}})
+\tilde{\alpha}^{2}\Gamma(\tilde{\alpha}-1,\tilde{\alpha})}{\Gamma(\tilde{\alpha}+1,\tilde{\alpha})}\\
&= ( 1 - 1/2\alpha )\tilde{\alpha}^{2}\frac{
\Gamma(\tilde{\alpha}-1,\tilde{\alpha})}{\Gamma(\tilde{\alpha}+1,\tilde{\alpha})}.
\end{align*}
\end{proof}
\subsection*{Theorem~\ref{theorem:limit}}
\begin{proof}
    Let a smooth bid policy $b$ satisfying $b_t\leq k_t$. We have
$U(k_t,b_t)=q(b_t)(k_t-\mathfrak{p}(b_t))$. Set $Q_t = \int_0^t q(b_s)\dd s $. Then by integration by part, and because $0=k_0=b_0$ by assumption,
\begin{align*}
     &\int_0^{+\infty}\mu e^{-\mu\int_0^t q(b_s)\dd s }U(k_t,b_t)\dd t \\
     &=\int_0^{+\infty}\mu e^{-\mu Q_t }q(b_t)(k_t-\mathfrak{p}(b_t))\dd t\\& =
     \int_0^{+\infty} e^{-\mu Q_t }(\dot{k}_t-\dot{b}_t\dot{\mathfrak{p}}(b_t))\dd t.
\end{align*}
Now observing that 
\begin{align*}
  \lim_{\mu\to+\infty}       \frac{\int_0^{+\infty} e^{-\mu Q_t }(\dot{k}_t-\dot{b}_t\dot{\mathfrak{p}}(b_t))\dd t}
         {\int_0^{+\infty} e^{-\mu Q_t }\dd t} = \dot{k}_0-\dot{b}_0\dot{\mathfrak{p}}(b_0) = \dot{k}_0-\dot{b}_0\dot{\mathfrak{p}}(0),
\end{align*}
and combining with  Lemma~\ref{lemma:limit_gamma}, we get 
\begin{align*}
 \lim_{\mu\to+\infty}  \lim_{\gamma\to 0} \gamma V_{b,\gamma,\mu}(0) = \dot{k}_0-\dot{b}_0\dot{\mathfrak{p}}(0)
\end{align*}
We see that only the derivative of $b$ at $0$ matters here, so we can without loss of generality suppose that $b_t=\alpha k_t$ for some $\alpha\leq 1$.
We end up with 
 $\dot{k}_0(1-\alpha\dot{\mathfrak{p}}(0))$. Since $\dot{\mathfrak{p}}(0)>0$, this quantity is minimized for $\alpha =1$ and maximized for $\alpha=0$, otherwise said, 
the left-hand side of the Theorem is equal to  $\dot{\mathfrak{p}}(0)$.
Then we observe that  $\mathfrak{p} = \frac{\int_0^t s \dd (q(s))}{q(s) }$. The remaining follows after integration by part and taking the limit at 0.
\end{proof}

\bibliographystyle{plain}

\end{document}